\newtheorem{theorem}{Theorem}
\newtheorem{corollary}{Corollary}
\newtheorem{definition}{Definition}
\newcommand*\patchAmsMathEnvironmentForLineno[1]{%
	\expandafter\let\csname old#1\expandafter\endcsname\csname #1\endcsname
	\expandafter\let\csname oldend#1\expandafter\endcsname\csname end#1\endcsname
	\renewenvironment{#1}%
	{\linenomath\csname old#1\endcsname}%
	{\csname oldend#1\endcsname\endlinenomath}}%
\newcommand*\patchBothAmsMathEnvironmentsForLineno[1]{%
	\patchAmsMathEnvironmentForLineno{#1}%
	\patchAmsMathEnvironmentForLineno{#1*}}%
\title{\LARGE\textbf{Connected Assembly and Reconfiguration\\ by Finite Automata}}
\author{\normalsize Sándor P. Fekete$^*$, Eike Niehs$^*$, Christian Scheffer$^*$, Arne Schmidt\thanks{Department of Computer Science, TU Braunschweig, Germany.
		$\{$s.fekete, e.niehs, c.scheffer, arne.schmidt$\}$@tu-bs.de}%
}
\date{}
\begin{document}
	\maketitle
\begin{abstract}
	We consider methods for connected reconfigurations by finite automate
in the so-called \emph{hybrid} or \emph{Robot-on-Tiles} model of
programmable matter, in which a number of simple robots move on and
rearrange an arrangement of passive tiles in the plane that form \emph{polyomino} shapes,
making use of a supply of additional tiles that can be placed.
We investigate the problem of reconfiguration under the constraint of maintaining connectivity of the tile arrangement;
this reflects scenarios in which disconnected subarrangements may drift 
apart, e.g., in the absence of gravity in space. We show that two finite automata
suffice to mark a bounding box, which can then be used as a stepping stone for
more complex operations, such as scaling a tile arrangement by a given factor,
rotating arrangements, or copying arrangements to a different location.
We also describe an algorithm for scaling monotone polyominoes without the help
of a bounding box.
\end{abstract}

\section{Introduction}

The field of programmable matter deals with materials or structures that modify or adjust their
physical properties like shape, density or conductivity in complex manners.
These changes can be carried out by the material itself, or be controlled from the outside.
That allows components to interact with each other or its environment, as well as carry out tasks
like self-assembling, sensing or storing states and carrying out computations.
The involved materials or structures are often biologically inspired~\cite{Schmied2017,McEvoy2015}; 
possible application scenarios can be found
in different research areas, like architecture \cite{Willmann2013,Tibbits2012},
aircraft \cite{Barbarino2011}, surgery \cite{Smith2011} or nano-robotics
\cite{Dolev2016,Dolev2018}.  Challenges arise from the serious constraints of individual particles,
such as limited memory, computing capabilities, communication or field of vision.
This makes it relevant to develop algorithms that only make minimal assumptions on the capabilities
of robots.


In this paper, we investigate the problem of recognizing and modifying a given
connected arrangement $P$ of unit square tiles forming a polyomino by a small
number of finite automata, while maintaining connectivity between all tiles and
all robots during the execution of an algorithm.  Given the limitations of
finite automata, these are incapable of making use of any complex information about $P$;
instead, they need to rely on the structure of $P$ itself for carrying out their task,
making use of suitable protocols that result in the desired outcomes, such as scaling a
$P$ by a constant factor $c$, producing a copy of $P$, or rotating $P$. 

\subsection{Our Results}

Our approach makes use of first computing a bounding box for the arrangement $P$,
which is then used for more advanced tasks. Our results include the following. 

\begin{enumerate}
	\item Given a polyomino \(P\), constructing a bounding box surrounding \(P\) can be done using two robots in \(O(max(w,h) \cdot (wh + k \cdot |\partial P|))\) steps (see Theorem~\ref{th:bounding_box}).
In case of simply connected (i.e., hole-free) polyominoes, we show that one robot is 
sufficient (see Corollary~\ref{cor:bounding_box_simple}).
	\item Given a polyomino \(P\) that is already surrounded by a bounding
box, scaling by a constant scaling factor \(c\) next to the initial position of
\(P\) can be done with two robots in \(O(wh \cdot (c^2 + cw + ch))\) steps (see
Theorem~\ref{th:scaling}).  For simply connected polyominoes, one robot is
sufficient (see Corollary~\ref{cor:scaling_simple}).
	\item For a polyomino \(P\) that is already scaled by a known scaling factor \(c\), 
	we can reverse the process by down-scaling $P$ by a factor of \(\frac{1}{c}\) in \(O(wh \cdot (c^2 + cw + ch))\) 
	steps using one robot (see Corollary~\ref{cor:down_scaling}).
	\item We show that with the help of a bounding box, other reconfiguration
algorithms can also be carried out in a connected fashion
(see Theorem~\ref{th:adapting}).
	\item Given a monotone polyomino \(P\), 
	scaling by a constant scaling factor \(c\) can be done in 
	\(O(w^2hc^2)\) steps using two robots and 
	without requiring a bounding box construction (see Theorem~\ref{th:scale_monotone}).
\end{enumerate}

\textbf{Overview:} In \Cref{Prel} we start by introducing the basic model and definitions.
Our algorithm for constructing a bounding box with the help of just two finite-state robots is presented in \Cref{BB}.
In \Cref{Scale} we present our algorithm for scaling a given polyomino when the bounding box is already constructed.
The scaling itself can be carried out with one robot; as the construction of a bounding box is required as a stepping stone, 
the overall protocol also requires two robots.
The method for down-scaling an enlarged polyomino is described in \Cref{DownScaling},
while the strategy for adapting other algorithms to the connectivity constraints
is presented in \Cref{AdaptAlg}.  Finally, we
describe an algorithm for scaling monotone polyominoes without the previous
bounding box construction in \Cref{monotone}.

\subsection{Related Work}

In 1954 Golomb published his work ''Checker Boards and Polyominoes''
\cite{Golomb1954}, which appears to be one of the first articles dealing with
polyominoes.  Since then, a considerable body of work has been published.

\textbf{Tiling} deals with the capability of polyominoes
to tile the plane or specific shaped regions.  A hierarchy of tiling
capabilities for polyominoes was established in \cite{Golomb1966}, including a
classification for all simpler polyominoes up through hexominoes.
Furthermore, there is work regarding bounded regions, like rectangles
\cite{Klarner1969,Reid1997}, or tiling with sets of different polyominoes
\cite{Golomb1970}.  More recent research focuses on parallel
\cite{Takefuji1990} or genetic algorithms \cite{Gwee1996},
three-dimensional issue~\cite{Yamamoto1998}, or deciding whether a given polyomino can tile
the plane or not~\cite{Beauquier1991}.

In the field of \textbf{molecular self-assembly}, Winfree~\cite{Winfree1998}
showed that the abstract Tile Assembly Model (aTAM) can perform universal
computation.  The aTAM uses Wang Tiles~\cite{Wang1961}, which have a square
shape and glue on every side.
Tiles with equal glues on a square side can stick together to form larger arrangements of tiles.
The aTAM and its extensions, e.g.,
\cite{Adleman1999,Aggarwal2005,Soloveichik2007,Demaine2008}, 
are studied in tile-based self-assembly.
Demaine et al.~\cite{Demaine2008} presented 
a generalization that uses staging and hence enables more flexibility for laboratory experiments.
Other interesting results within the staged self-assembly model are shown in~\cite{Demaine2017}.
Another generalization was defined by Fekete et al.~\cite{Fekete2014}, 
who use polyominoes consisting of any number of unit square tiles, instead of only single square tiles.
For a survey of self-assembly models and recent results, see Patitz~\cite{Patitz2014}.

Another active area of research considers \textbf{robots or agents on graphs}, which
is similar to our Robots-on-Tiles model.  Regarding mazes as a specialization
of graphs, Blum and Kozen~\cite{Blum1978} showed that two automata can search an arbitrary maze.
Other work considers general graph exploration (e.g.,
\cite{Panaite1999,Fraigniaud2005,Fleischer2005}), as a distributed or collaborative problem using multiple agents (e.g.,
\cite{Bender1994,Fraigniaud2006,Das2007,Brass2011}) or with space limitations (e.g.,
\cite{Fraigniaud2004,Fraigniaud2005,Diks2004,Gasieniec2007,Gasieniec2008}).

Another related family of problems arises from \textbf{rendezvous search}~\cite{Alpern1995}, 
where two robots try to meet in a known or unknown environment.
Anderson and Weber~\cite{Anderson1990} presented an optimal strategy for discrete locations.
The variant of deterministic rendezvous search was studied in~\cite{Dessmark2006,Marco2006,Ta-Shma2014}.
The generalization to more than two robots is known as the \emph{gathering problem}, which was 
investigated on graphs~\cite{Kamei2011,DiStefano2013,DAngelo2013} or in the plane~\cite{Flocchini2005,Cieliebak2003,Czyzowicz2009}.

Specific problems in the context of \textbf{programmable matter}
include self-folding matter \cite{Hawkes2010,Knaian2012,Felton2014} or
self-disassembling magnetic robot pebbles \cite{Gilpin2010}. Inspired
by single-celled amoeba, Derakhshandeh et al.~\cite{Derakhshandeh2014} introduced
a fundamental concept for studying algorithmic approaches for extremely simple robotic agents,
called the \emph{Amoebot model}, with a generalized variant and further recent results
presented in~\cite{Derakhshandeh2015a}.  The Amoebot model provides a framework based
on an equilateral triangular graph and active particles that can occupy a
single vertex or a pair of adjacent vertices within that graph.  With just a few
possibly movements, these particles can form different shapes like
lines, triangles or hexagons \cite{Derakhshandeh2015}, or perform leader election
as a stepping stone for further operations~\cite{Derakhshandeh2015a,Daymude2017}.
A universal shape formation algorithm in the Amoebot model was described by
Di Luna et al.~\cite{DiLuna2017}.  An algorithm for solving the problem of
coating an arbitrarily shaped object with a layer of self-organizing
programmable matter was presented by Derakhshandeh et al.~\cite{Derakhshandeh2016a} 
and analyzed in \cite{Derakhshandeh2016}.  Other models with active particles were introduced
in by Woods et al.~\cite{Woods2013} in the Nubot model and by Hurtado et al.~\cite{Hurtado2015} for modular
robots.  Gmyr et al.~\cite{Gmyr2017}  introduced a \emph{hybrid} model with two types of
particles: active robots acting like a deterministic finite automaton and
passive tile particles.  Furthermore, they presented algorithms for shape
formation \cite{Gmyr2018a} and shape recognition \cite{Gmyr2018} using robots
on tiles.  This resembles the "Robot-on-Tiles" model of our work, for which 
we introduced geometric algorithms for copying, reflecting,
rotating and scaling of a given polyomino as well as an algorithm for
constructing a bounding box surrounding a polyomino in \cite{Fekete2018}.

However, none of these models or methods preserve connectivity, which is the main constraint
addressed by this paper.

\section{Preliminaries}
\label{Prel}

In the following, we introduce the basic model used for our work,
including general definitions and a description of the applicable constraints.

\subsection{Model}

We consider an infinite \textit{square grid graph} \(G\), where
\(\mathbb{Z}^2\) defines the \textit{vertices}; for every two vertices with
distance one there is a corresponding \textit{edge} in \(G\).
We use the compass directions \((N,E,S,W)\) for orientation when moving on the grid and may use \textit{up, right, down} and \textit{left} synonymously.

\begin{figure}[h]
	\centering
	\includegraphics[width=.45\columnwidth]{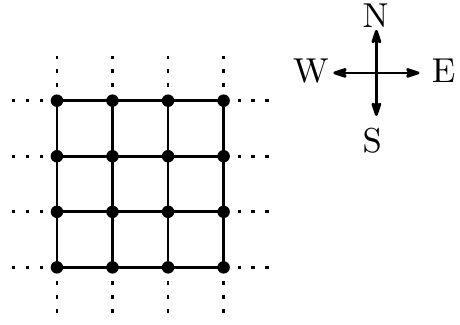}
	\caption[Definition Grid Graph]{An infinite square grid graph.}
	\label{def_gfx_grid}
\end{figure}

Every vertex of \(G\) is either \textit{occupied} by a tile or \textit{unoccupied}.
\textit{Tiles} represent passive particles of programmable matter and cannot move or manipulate themselves.
\Cref{def_gfx_tiles/P} (a) shows a graph where exactly one vertex is occupied.
A connected subset of vertices, where every vertex is occupied by a tile, forms a \textit{polyomino}, e.g., as shown in \Cref{def_gfx_tiles/P} (b).

\begin{figure}[H]\centering
	\subfigure[]{
		\includegraphics[scale=0.8]{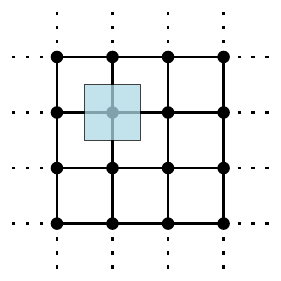}
	}\hfil
	\subfigure[]{
		\includegraphics[scale=0.8]{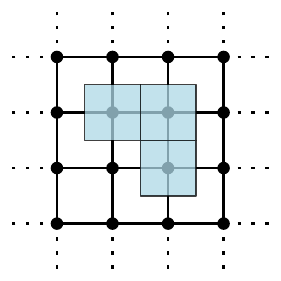}
	}
	\caption[Definition Tiles]{(a) One vertex occupied by a tile.
(b) A connected subset of vertices, in which every vertex is occupied by a tile, forming a polyomino.}
	\label{def_gfx_tiles/P}
\end{figure}

The \textit{boundary} of a polyomino \(P\) is denoted by \(\partial P \) and includes all tiles of \(P\) that are adjacent to an empty vertex (see also \Cref{def_gfx_partial-holes-comp} (a)).
Polyominoes can have \textit{holes}, i.e., there may be a single vertex, or connected sets of vertices that are empty and surrounded by tiles of \(P\).
\Cref{def_gfx_partial-holes-comp} (b) shows a polyomino with a hole.
Polyominoes without holes are called \textit{simple}; otherwise, they are \textit{non-simple}.
A polyomino consists of exactly one \textit{connected component} of tiles on the grid graph.
Hence, multiple separate connected components on the grid induce separate polyominoes.
\Cref{def_gfx_partial-holes-comp} (c) shows two connected components.

\begin{figure}[h]\centering
	\subfigure[]{
		\includegraphics[scale=0.73]{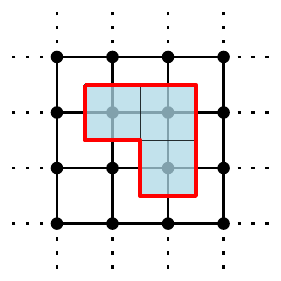}
	}\hfil
	\subfigure[]{
		\includegraphics[scale=0.73]{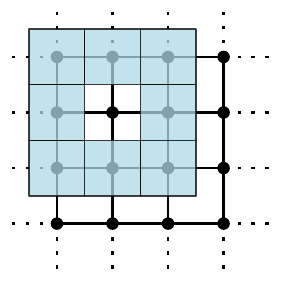}
	}\hfil
	\subfigure[]{
		\includegraphics[scale=0.73]{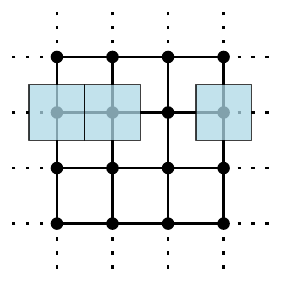}
	}
	\caption[Definition Polyominoes]{(a) The red line indicates the boundary of \(P\), denoted by \(\partial P\) (b) A non-simple polyomino with one hole within it.
(c) The tiles on the grid induce two separate connected components.}
	\label{def_gfx_partial-holes-comp}
\end{figure}

\textit{Monotony} describes a property of polyominoes.
In our case, we are interested in monotony with respect to the x- or y-axis.
A polyomino fulfills the monotony property for one axis if every line orthogonal to that axis intersects $\partial P$ at most twice.
\Cref{def_gfx_monotone} (a) shows an example for a non-monotone polyomino, (b) the case of an x-monotone polyomino and finally (c) an example for an x-y-monotone polyomino, also known as convex; the red-dotted-lines indicate orthogonal lines to one axis.

\begin{figure}[h]\centering
	\subfigure[]{
		\includegraphics[scale=0.6]{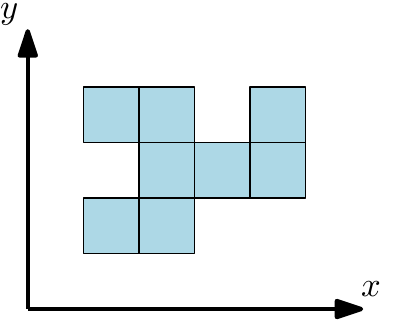}
	}
	\subfigure[]{
		\includegraphics[scale=0.6]{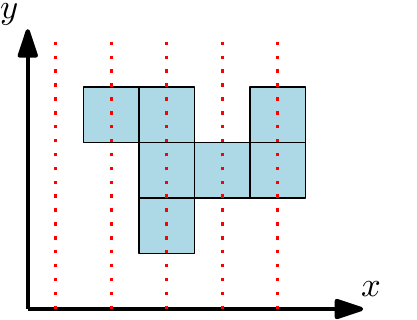}
	}
	\subfigure[]{
		\includegraphics[scale=0.6]{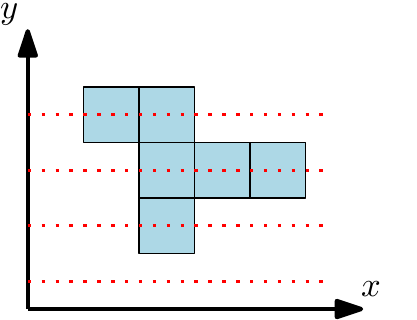}
	}
	\caption[Definition Monotony]{(a) A non-monotone polyomino.
(b) An x-monotone but not y-monotone polyomino.
(c) A convex polyomino.}
	\label{def_gfx_monotone}
\end{figure}

We use \textit{robots} as active particles in our model.
These robots work like \textit{finite deterministic automata} that can move around on the grid and manipulate the polyomino.
We indicate robots by a colored circle, as shown in \Cref{def_gfx_robot} (a).
A robot has the abilities to move along the edges of the grid graph and to change the state of the current vertex by placing or removing a tile on it.
Robots work by a series of Look-Compute-Move (LCM) operations:
Depending on the current state of the robot and the vertex it is positioned on (Look), the next step is computed concerning a specific transition function $\delta$ (Compute), which determines the future state of robot and vertex and the actual movement (Move).
Due to their finiteness, robots cannot perform operations that require non-constant memory, like counting or 
saving the coordinates of the current position.
In the case of multiple robots (\Cref{def_gfx_robot} (b)) we assume that they cannot be located on the same vertex at the same time.
Communication between robots is limited to adjacent vertices and can be implemented by expanding the look-phase by the states of all adjacent robots.
For the purposes of this paper, connectivity is ensured if the grid graph induced
by the union of all placed tiles and all robots is connected.
This implies that a robot can hold two components together, as shown in Figure~\Cref{def_gfx_robot} (c).

\begin{figure}[h]\centering
	\subfigure[]{
		\includegraphics[scale=0.7]{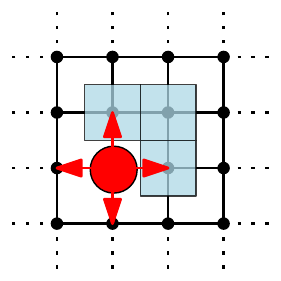}
	}\hfil
	\subfigure[]{
		\includegraphics[scale=0.7]{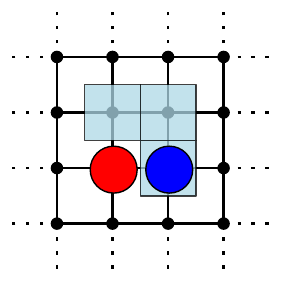}
	}\hfil
	\subfigure[]{
		\includegraphics[scale=0.7]{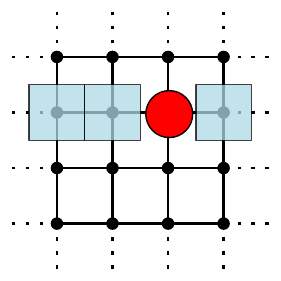}
	}
	\caption[Definition Robots]{(a) One robot and its moving possibilities.
(b) Two robots on the grid.
(c) Robots can hold separate connected components together.}
	\label{def_gfx_robot}
\end{figure}

\section{Constructing a Bounding Box}\label{BB}
Fekete et al.~\cite{Fekete2018} showed how to construct a bounding box around a polyomino.
However, this algorithm does not guarantee connectivity.
We describe an algorithm to construct a bounding box of width one, as shown in \Cref{BB_gfx_def}.
The bounding box of a given polyomino \(P\) will be denoted by \(bb(P)\) and consists of four edges or sides, one for every direction.
To accomplish the required connectivity between every placed tile, i.e., the bounding box, and \(P\), we specify without any loss of generality that the connection between \(bb(P)\) and \(P\) must be on the south side of the boundary.
For ease of presentation, we colorize the polyomino in blue and the bounding box in gray. 
The robots cannot distinguish between those tiles.

\begin{figure}[h]
	\centering
	\includegraphics[scale=0.7,page=13]{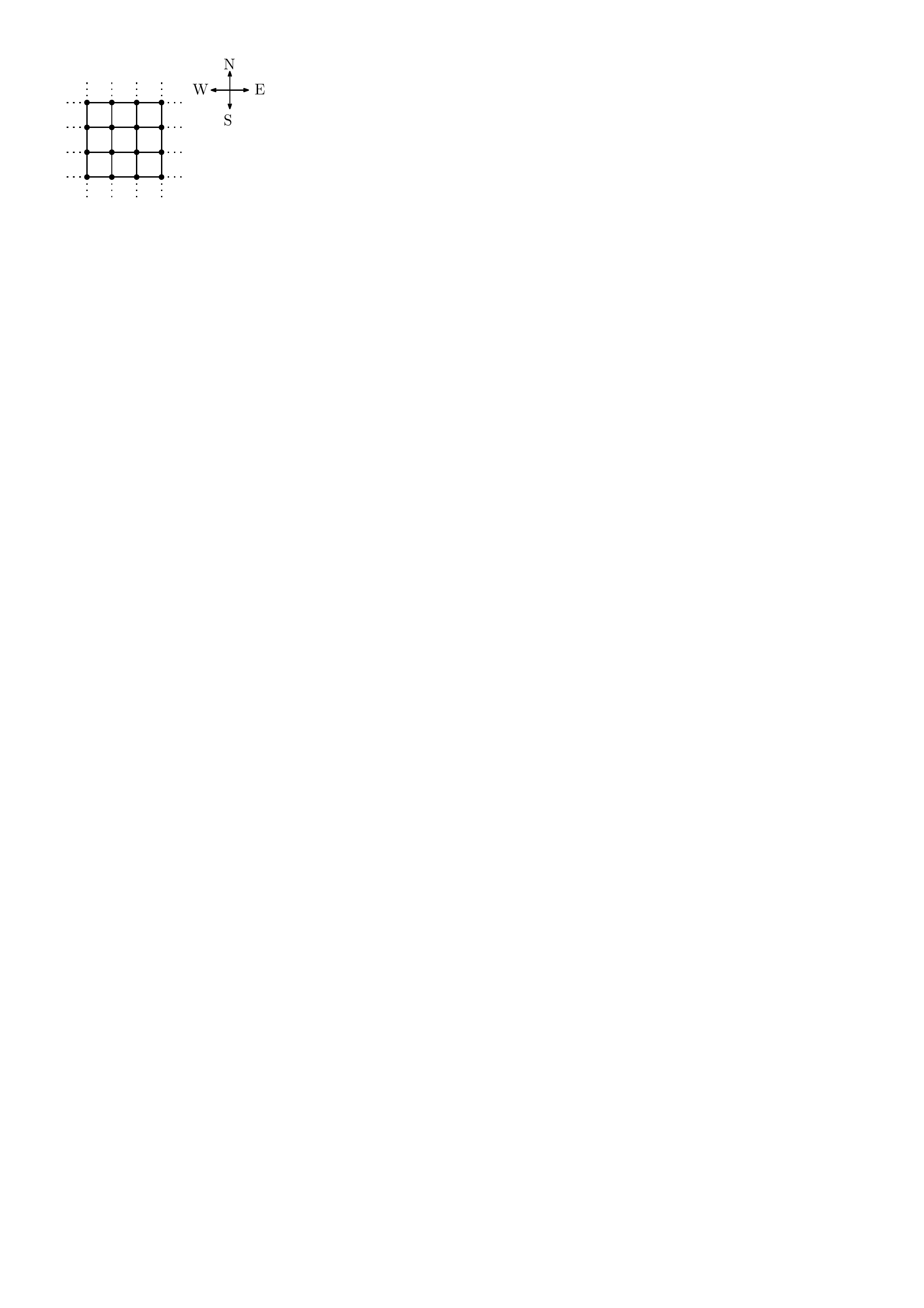}
	\caption[Definition Bounding Box]{A polyomino \(P\) and its bounding box, colored in gray.}
	\label{BB_gfx_def}
\end{figure}

We start with somef initial assumptions.

\begin{itemize}
	\item We assume that two robots are placed adjacent to each other on an arbitrary tile of the polyomino \(P\).
	\item The first robot \(R1\) is the leader, whose tasks are to find a possible starting position, construct the boundary and to decide whether a reached tile belongs to \(P\) or the already assembled bounding box.
	\item The second robot \(R2\) holds the polyomino and the bounding box together.
	The first boundary tile is placed right beneath this robot.
	Meeting the second robot from above corresponds to moving on the polyomino and reaching it from below indicates running on the bounding box.
\end{itemize}

\subsection{Construction}\label{BB_Construction}

The construction can be split into three phases.
The first is to locate a suitable starting position, the second covers the actual assembly process, and the last is to finish the construction.

\begin{itemize}
	\item FindStartingPosition:
	\begin{enumerate}
		\item \label{BB_ConstructionYMin} The first robot \(R1\) leads the search for an appropriate starting position.
That can be done by searching for a local y-minimal vertex, which is occupied by a tile.
The robots are always able to find a local y-minimum by repeating the following two steps until completion.
		
		\begin{enumerate}
			\item \(R1\) moves down until the last occupied vertex is reached.
			\item \(R1\) moves sideways to check any other column within that row for a possibly lower placed tile.
First in one direction and then in the other.
If there is no more column with a lower occupied vertex, the local y-minimum is found.
		\end{enumerate}
	
		The other robot \(R2\) follows \(R1\) by stepping into the direction, where \(R1\) was located.
When the robots reached the end of a column or row, it can be necessary to move a step back, i.e., to make a step in 
the opposite direction as before.
Therefore we need an additional communication step between every two move steps to ensure that both robots perform the same move.
		\item \(R2\) positions itself on the first vertex beneath this local y-minimal vertex.
Afterwards, \(R1\) starts the bounding box construction one vertex further down.
See \Cref{BB_gfx_StartPos} for an example initial arrangement.
	\end{enumerate}

	\begin{figure}[h]
		\centering
		\includegraphics[scale=0.7]{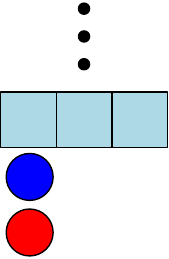}
		\caption[Starting Arrangement]{Arrangement of the two robots after a possible starting position is found.}
		\label{BB_gfx_StartPos}
	\end{figure}
	
	\item AssemblingBB:
	\begin{enumerate}
		\item The construction is performed clockwise around \(P\), i.e., whenever possible, \(R1\) makes a right turn.
		
		\item Once \(R1\) hits a tile during the construction, we want to know whether this is a tile of \(P\) or a tile of \(bb(P)\).
We describe this sub-procedure in \Cref{BB_Membership}.
If it is a tile belonging to \(P\), we need to shift the current line outwards until there is no more conflict and continue the construction.
\Cref{BB_gfx_shifting} shows an example; once we hit a tile, which belongs to \(P\) (\Cref{BB_gfx_shifting} (a)), we shift outwards until the assembly can be continued (b).
In case the hit tile belongs to the already assembled bounding box, we continue by finishing the construction process.
		
		\begin{figure}[h]\centering
			\subfigure[]{
				\includegraphics[scale=0.7]{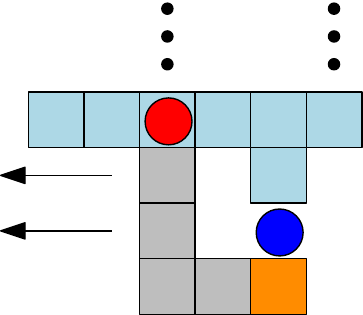}
			}\hfil
			\subfigure[]{
				\includegraphics[scale=0.7]{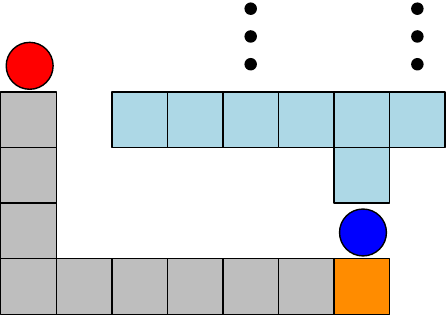}
			}
			\caption[Shifting Boundary Edge]{(a) \(R1\) hits a tile belonging to \(P\).
(b) The triggered shifting process is finished.}
			\label{BB_gfx_shifting}
		\end{figure}
	
		\item\label{BB_ConstructionBridge} In case the line to be shifted is the first edge of the constructed bounding box, we know that there exists a tile of \(P\) that has a lower y-coordinate than the current starting position.
So we build up a bridge to traverse this gap.
In the following, we describe the procedure of bridge building analogous to \Cref{BB_gfx_bridge}.
	
		\begin{enumerate}
			\item \(R1\) recognizes a tile belonging to \(P\).
			\item 
			\(R1\) removes the previously placed tile to mark the end of the constructed bridge.
Subsequently, it moves backward on the bridge to collect \(R2\).
At this moment, the bridge is only connected through the second robot on the previous starting position.
			\item \(R2\) places a tile on the vertex it stays on.
Both robots return to the end of the bridge and complete it.
Now we are connected at be beginning and the end of the bridge.
			\item \(R2\) returns to the previous starting position to destroy the bridge from its start.
Hence the robots continue by restarting the search for a starting position.
		\end{enumerate}
	
		\begin{figure}[h]\centering
			\subfigure[]{
				\includegraphics[scale=0.55]{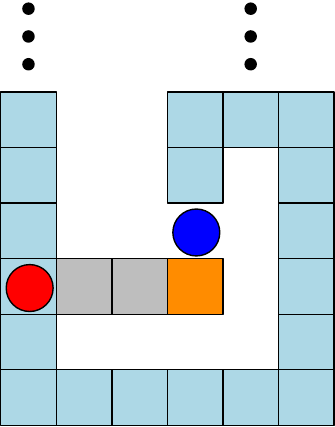}
			}\hfill
			\subfigure[]{
				\includegraphics[scale=0.55]{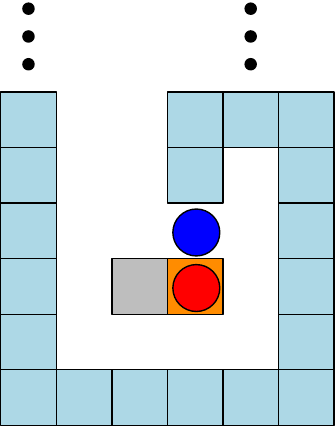}
			}\hfill
			\subfigure[]{
				\includegraphics[scale=0.55]{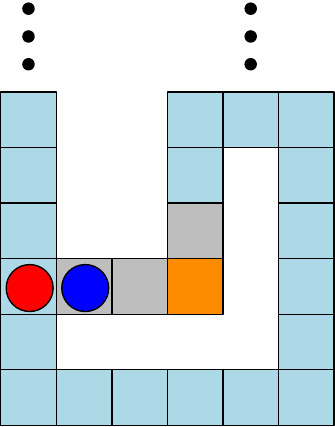}
			}\hfill
			\subfigure[]{
				\includegraphics[scale=0.55]{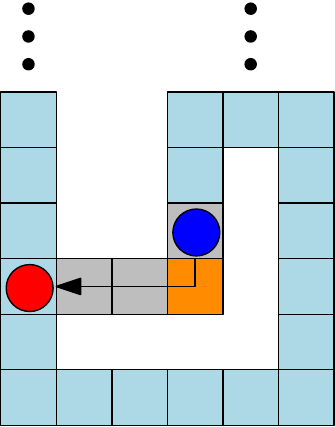}
			}
			\caption[Building Bridges]{Traversing a gap by building a bridge}
			\label{BB_gfx_bridge}
		\end{figure}
	\end{enumerate}

	\item Finishing:
	At some point, \(R1\) hits a tile that is part of the already assembled bounding box.
At that point, we can finish the bounding box construction.
	Two possible cases can occur:
	\begin{enumerate}
		\item \(R1\) hits the very first boundary tile, which is the tile beneath \(R2\), while building from east to west.
As it is shown in \Cref{BB_gfx_perfectFinish}, the bounding box is finished.
		\begin{figure}[h]
			\centering
			\includegraphics[scale=0.7]{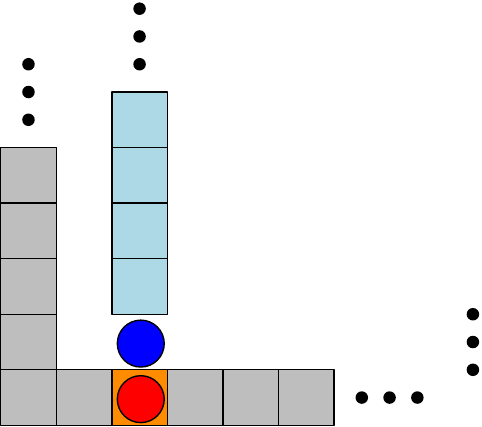}
			\caption{Perfect finish of bounding box construction}
			\label{BB_gfx_perfectFinish}
		\end{figure}
		
		\item\label{BB_FinishConstrCase2} \(R1\) hits the bounding box on any other tile, or the very first boundary tile while building from south to north, see \Cref{BB_gfx_finishing} (a) for an example.
If the hit tile is not a corner tile, the current line needs to be shifted outwards until the next corner is reached.
From that, we know that firstly exactly one three-way crossing exists and secondly there must be some part of the already assembled boundary that lies within the final bounding box that encloses \(P\).
We need to establish a new connection between \(bb(P)\) and \(P\) and remove the inner lying boundary tiles.
This can be done by executing the following steps, which are pictured in \Cref{BB_gfx_finishing}.
		\begin{itemize}
			\item \(R1\) moves clockwise around on the bounding box until the first tile of \(P\) with distance two to the southern side is found.
Starting with a configuration as in \Cref{BB_gfx_finishing} (b), the red robot would find the first suitable tile right above the position it is placed on in \Cref{BB_gfx_finishing} (c).
			\item \(R1\) ensures the connectivity between this tile and the boundary by placing one tile between both.
This step creates a second three-way crossing.
			\item \(R1\) retraces the path to \(R2\) by moving counterclockwise on the boundary and following every possible left-hand turn.
In \Cref{BB_gfx_finishing} (c) the path is indicated by the arrow.
			\item Once \(R2\) is fetched up, the part outgoing from the initially starting position can be destroyed until the first three-way junction is reached.
			\item Both robots move clockwise around on the bounding box searching for the only remaining three-way crossing, which was created a few steps before to ensure connectivity.
		\end{itemize}
		
		\begin{figure*}[h]
			\subfigure[]{
				\includegraphics[scale=0.5]{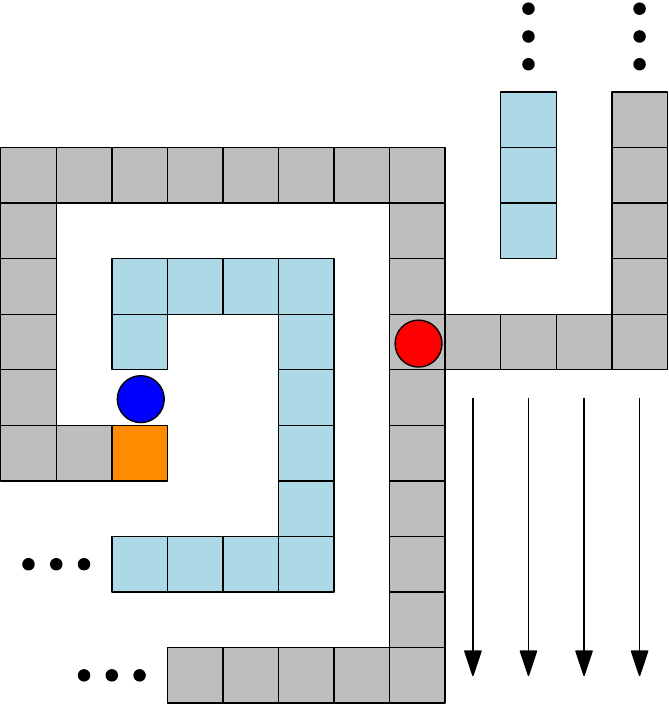}
			}\hfill
			\subfigure[]{
				\includegraphics[scale=0.5]{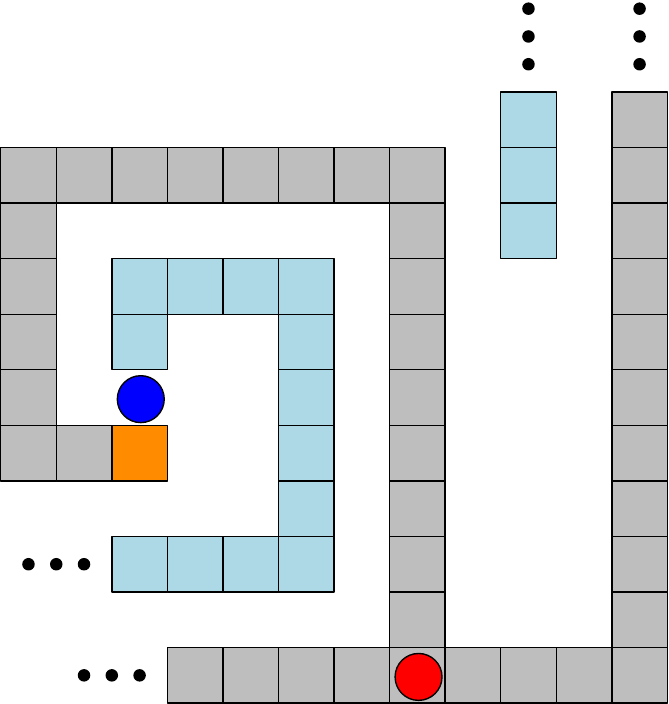}
			}\hfill
			\subfigure[]{
				\includegraphics[scale=0.5]{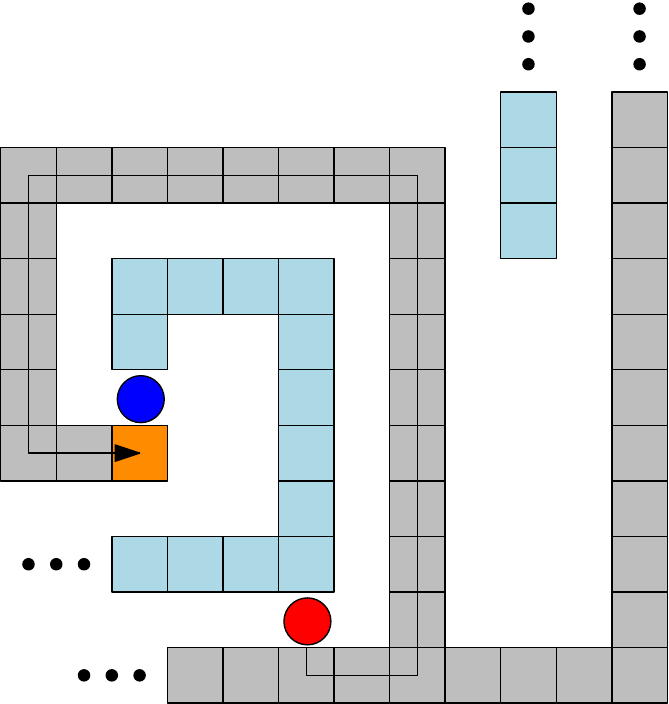}
			}\hfill
			\subfigure[]{
				\includegraphics[scale=0.5]{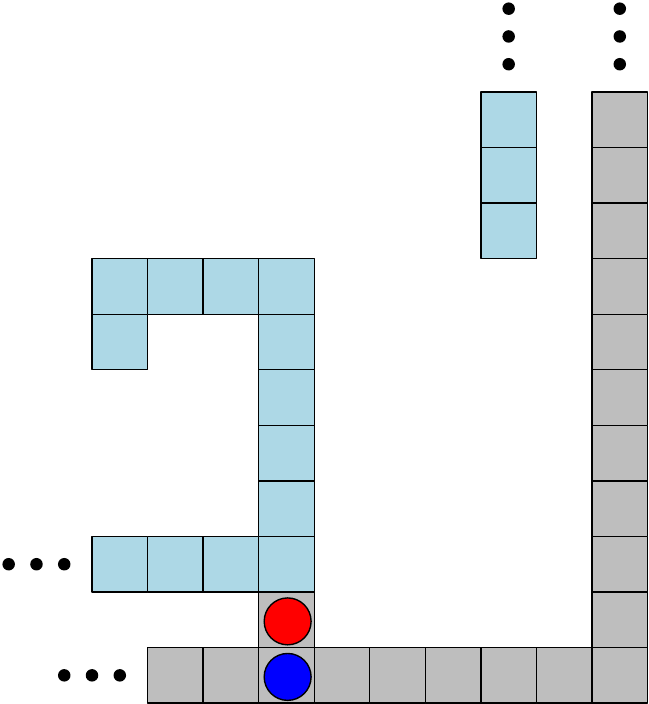}
			}
			\caption[Bounding Box Finish]{The second case of finishing the bounding box.
(a) An already constructed part of the bounding box is hit.
(b) The last boundary side is shifted.
(c) \(R1\) found a suitable new connectivity vertex above the southern side, places a tile and retraces path to the initial starting position.
(d) Unnecessary part of the bounding box is removed and both robots catch up to the new connection}
			\label{BB_gfx_finishing}
		\end{figure*}	
	\end{enumerate}

\end{itemize}

\subsection{Tile Membership}\label{BB_Membership}
If we hit any tile during the bounding box construction, we must know whether this tile belongs to the polyomino or the already assembled boundary.
To accomplish this, we make use of the following strategy.
Any possible bounding box tile has one blank adjacent vertex.
So if this is not given, we know without any further checks that the considered tile belongs to the polyomino.
Otherwise, we check the tile's membership as follows.
Whenever we hit a tile for which we would like to check membership (as for the tiles, the red robot \(R1\) is placed on in \Cref{BB_gfx_Membership-P} (a) and \Cref{BB_gfx_Membership-BB} (a)), the robot removes the previously placed tile (grey striped within both figures).
And, in case that was the first tile inducing a new side of the bounding box, i.e., it is the first tile after a right turn, we position a marker tile on the outer side of the boundary corner forming a left turn, see \Cref{BB_gfx_Membership-P} (b).
Since we only make right turns during the boundary construction, this will be the only discoverable left turn.
Hence, when we follow the bounding box beginning from the start, we can identify the position where the last bounding box tile was placed, either the vertex opposed to the only left turn or the first empty vertex when following the direction of the last side.
\Cref{BB_gfx_Membership-P} (b) shows the first and \Cref{BB_gfx_Membership-BB} (b) the second case.

\begin{figure}[h]\centering
	\subfigure[]{
		\includegraphics[scale=0.7]{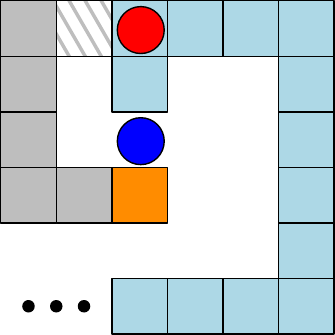}
	}\hfil
	\subfigure[]{
		\includegraphics[scale=0.7]{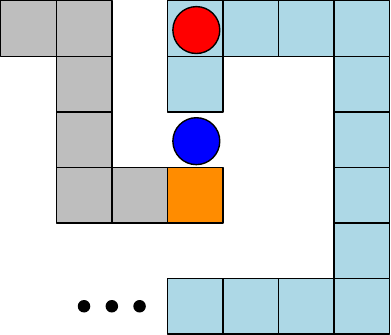}
	}
	\caption[Tile Membership \(P\)]{Checking membership of an arbitrary tile that belongs to the polyomino.
(a) The red robot is placed on the tile we have to check.
(b) Setting up the marker is done.
We can rediscover the progress of the current boundary construction process by searching for this left turn marker.}
	\label{BB_gfx_Membership-P}
\end{figure}

\begin{figure}[h]\centering
	\subfigure[]{
		\includegraphics[scale=0.7]{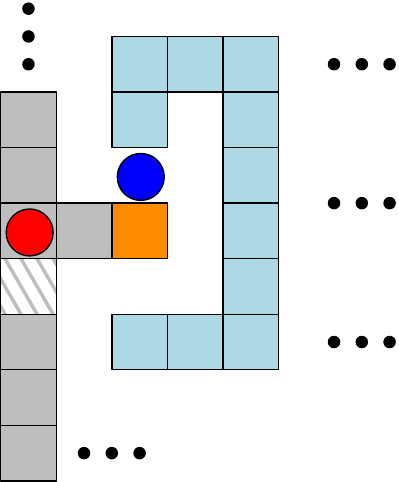}
	}\hfil
	\subfigure[]{
		\includegraphics[scale=0.7]{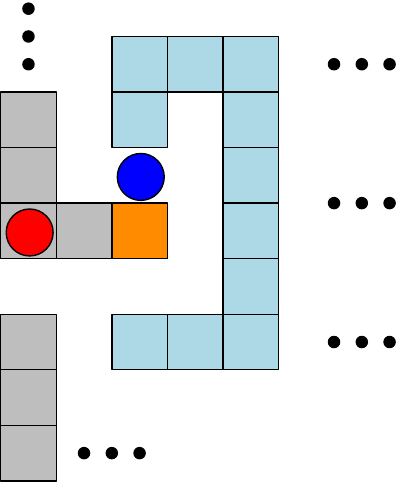}
	}
	\caption[Tile Membership \(bb(P)\)]{Checking membership of an arbitrary tile that belongs to the bounding box.
(a) The red robot is placed on the tile we have to check.
(b) In this case we only remove the previously placed tile.
The robot can rediscover the progress of the current bounding box construction process by following the direction of the last side and recognizing no left turn marker.}
	\label{BB_gfx_Membership-BB}
\end{figure}

From that the robot is going to move clockwise around  the boundary of the connected component, to which the tile we have to check belongs.
If the second robot \(R2\) is met from above, the tile to be tested belongs to the polyomino.
In the other case, we met \(R2\) from below, that indicates we tested a tile of the already constructed bounding box.
Both cases are shown in \Cref{BB_gfx_Membership-check} (a).

\begin{figure}[h]
	Case 1:
	\begin{center}
	\subfigure[]{
		\includegraphics[scale=0.7]{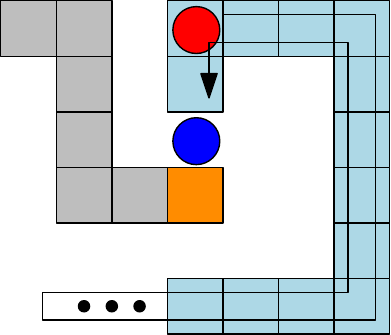}
	}\hfil
	\subfigure[]{
		\includegraphics[page=7,scale=0.7]{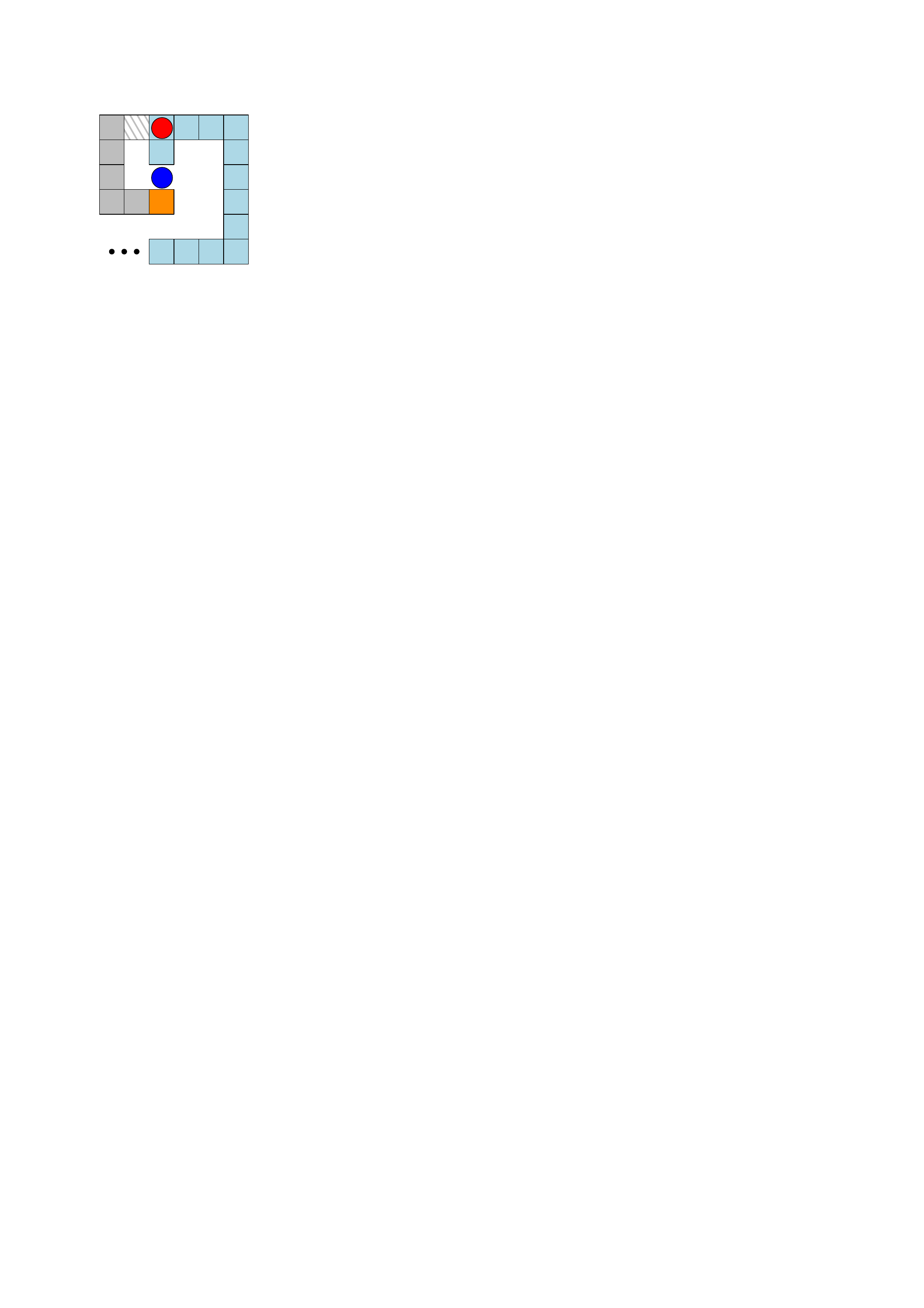}
	}
	\end{center}

	Case 2:
\begin{center}
	\subfigure[]{
		\includegraphics[scale=0.7]{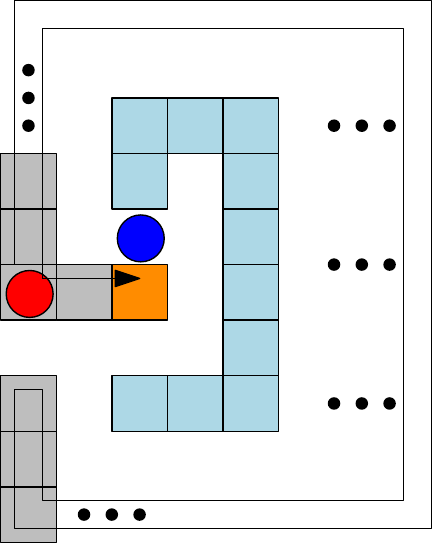}
	}\hfil
	\subfigure[]{
		\includegraphics[scale=0.7]{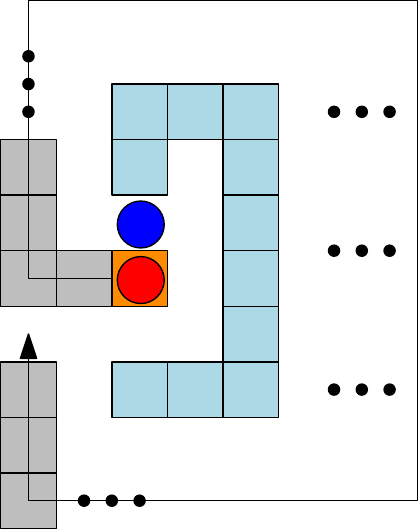}
	}
\end{center}
	\caption[Tile Membership Check]{(a) The arrows indicate the testing paths.
(b) Retracing to the checked tile with additionally knowledge about its membership.}
	\label{BB_gfx_Membership-check}
\end{figure}

In case the tested tile belongs to the polyomino, \(R1\) meets \(R2\) from above, which means \(R1\) moves on \(P\).
However, after we determined the tile's membership, we have to continue moving on the bounding box.
Hence we either switch the tasks of both robots and move synchronously one step downwards or \(R1\) steps along \(R2\) to reach the bounding box.
In case of a tested tile, which belongs to \(bb(P)\), \(R1\) is already on the boundary.
Afterward, we retrace to the position of the previously tested tile and continue the construction.

\subsection{Simple Polyominoes}\label{BB_Simple}

In case of a simple polyomino, the bounding box can be assembled using only one robot instead of two.
That is possible because we determine a bounding box structure, that enables the robot to distinguish between tiles belonging to the already constructed bounding box and those of the given polyomino.
Therefore the following adaptions are needed:

\begin{figure}[h]
	\centering
	\subfigure[]{
		\includegraphics[page=11,scale=0.8]{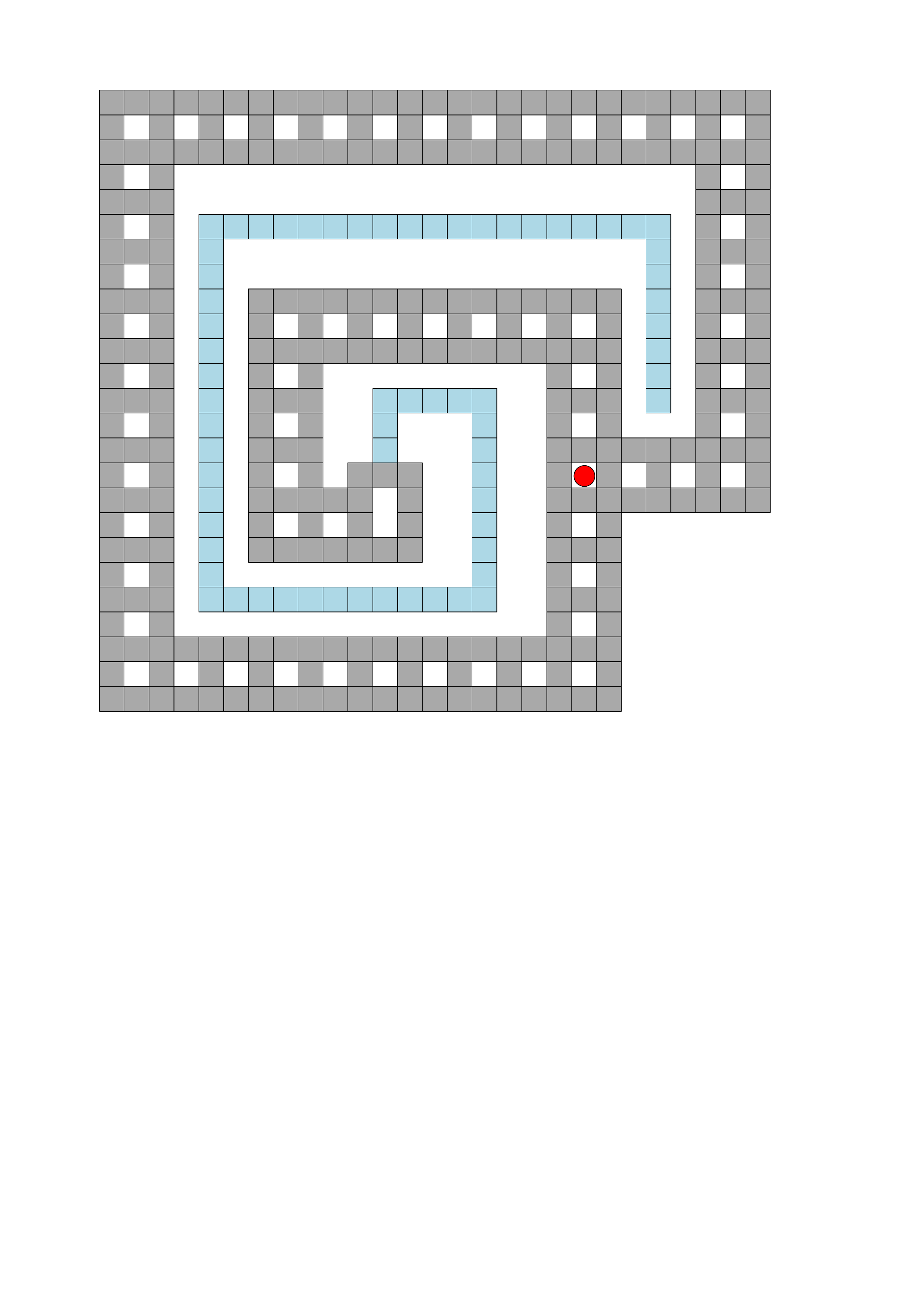}
	}\hfil
	\subfigure[]{
		\includegraphics[page=10,scale=0.8]{ipe/02-4_boundary-SimpleP.pdf}
	}\hfil
	\subfigure[]{
		\includegraphics[page=13,scale=0.8]{ipe/02-4_boundary-SimpleP.pdf}
	}	\hfil	
	\subfigure[]{
		\includegraphics[page=12,scale=0.8]{ipe/02-4_boundary-SimpleP.pdf}
	}
	\caption[Bounding Box Elements for Simple \(P\)]{Bounding box elements for simple polyominoes.
(a) Starting position.
(b) Bounding box elements.
(c) Bridge element.
(d) Connecting element for finishing bounding box construction.}
	\label{BB_gfx_simpleElements}
\end{figure}

\begin{itemize}
	\item Instead of creating a boundary of width one as described before, we construct it as a three tile wide line, consisting of an outer, middle and inner lane.
On the middle lane, every second vertex is unoccupied.
Since there are no holes within simple polyominoes, the robot can clearly distinguish between \(P\) and the bounding box.
(\Cref{BB_gfx_simpleElements} (b))

	\item The first bounding box element, which is placed right beneath a possible starting position has two vertically successive unoccupied vertices vertices.
(\Cref{BB_gfx_simpleElements} (a))

	\item To handle step \ref{BB_ConstructionBridge} of the assembling step in \Cref{BB_Construction}, that is traversing a gap with a bridge, it might be necessary to make use of a bounding box element with two horizontally successive unoccupied vertices.
The element is shown in (c) of \Cref{BB_gfx_simpleElements}, and an example for the purpose of application can be seen in \Cref{BB_gfx_bridgeSimple}.

	\begin{figure}[h]\centering
		\subfigure[]{
			\includegraphics[scale=0.6]{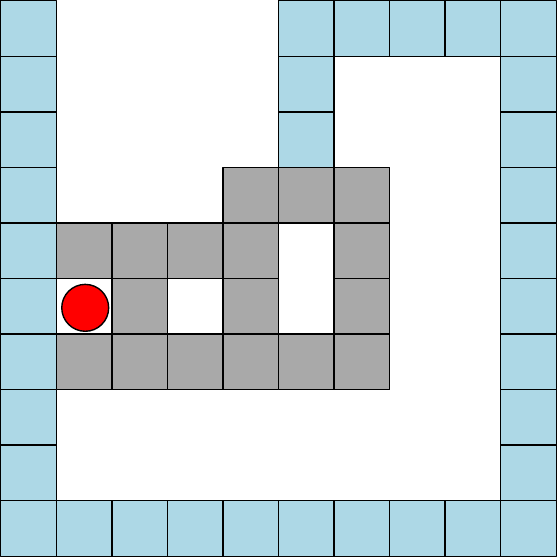}
		}\hfil
	\subfigure[]{
		\includegraphics[scale=0.6]{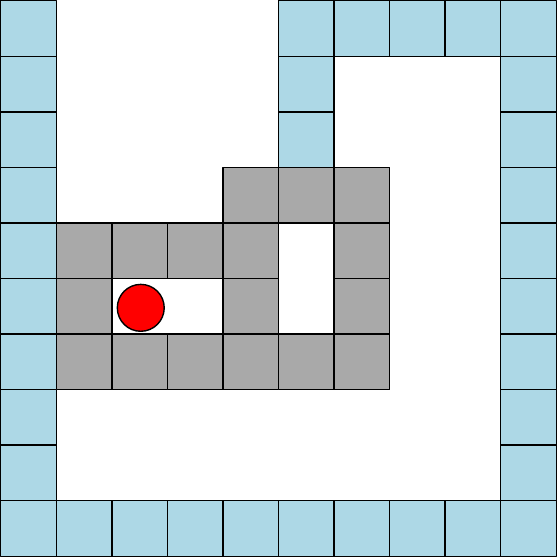}
	}
	\caption[Building Bridges Simple \(P\)]{Bridge construction in case of simple polyominoes.
(a) The bridge cannot be completed.
(b) Used the additional bridge element.}
		\label{BB_gfx_bridgeSimple}
	\end{figure}
	
	\item Because of the defined type of boundary elements, the minimum distance between polyomino and the constructed bounding box is either two or three and may differ for every side of the bounding box.
Since it might be necessary to search for a new connection between boundary and \(P\) while finishing the construction, and the minimal distance between the southern bounding box and \(P\) could be three, we need a second connection element, shown in Figure~\ref{BB_gfx_simpleElements} (d), with three vertically successive unoccupied vertices.
The use case can be seen in Figure~\ref{BB_gfx_finishingSimple}.

	\begin{figure*}[h]\centering
	\subfigure[]{
		\includegraphics[scale=0.55]{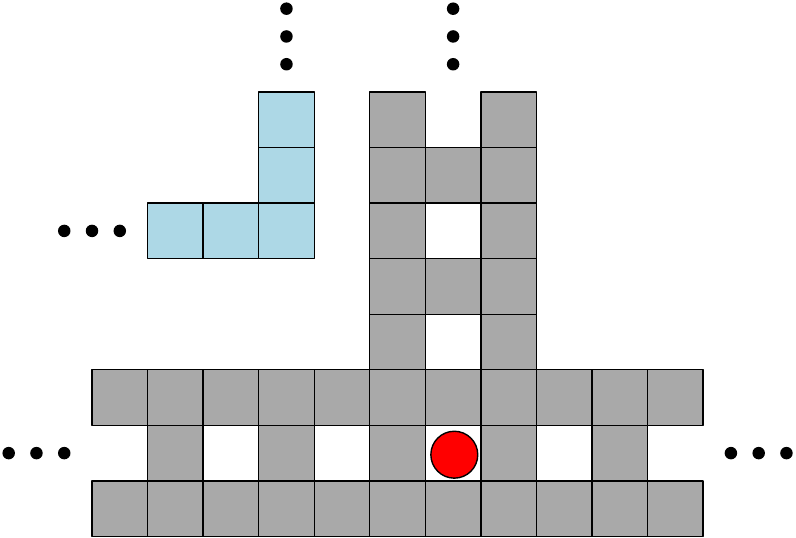}
	}\hfil
	\subfigure[]{
		\includegraphics[scale=0.55]{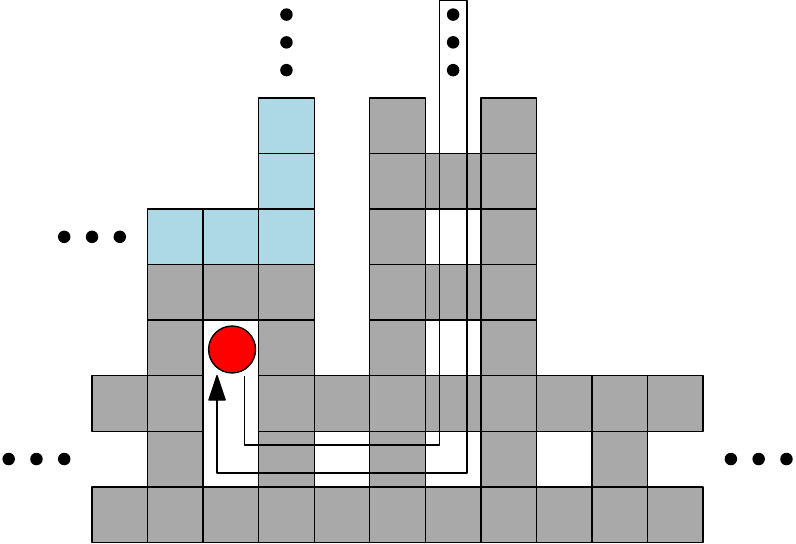}
	}\hfil
	\subfigure[]{
		\includegraphics[scale=0.55]{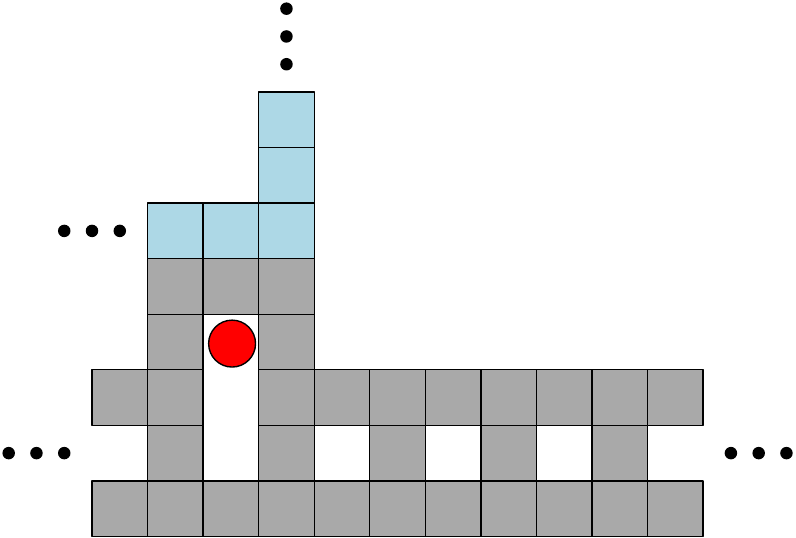}
	}
	\caption[Bounding Box Finish Simple \(P\)]{Finishing process in case of simple polyominoes.
(a) Search for a new connection.
(b) Used the additional connection element.
(c) Finished construction after retracing to initial start and removing unnecessary boundary elements.}
	\label{BB_gfx_finishingSimple}
	\end{figure*}
	
\end{itemize}

With this adaptions, we can build up a bounding box surrounding simple polyominoes with just one robot instead of two for the general case.

\subsection{Analysis}

\begin{theorem}\label{th:bounding_box}
	Given a polyomino \(P\) of width \(w\) and height \(h\), building a bounding box surrounding \(P\) with the need that boundary and \(P\) are always connected, can be done with two robots in
	\(O(max(w,h) \cdot (wh + k \cdot |\partial P|))\)
	steps, where \(k\) is the number of convex corners in \(P\).
\end{theorem}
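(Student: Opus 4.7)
The plan is to bound the three phases FindStartingPosition, AssemblingBB, and Finishing individually, and then multiply by the number of outer restarts caused by bridge building. I will use freely that $|\partial P|\ge\max(w,h)$ for nondegenerate polyominoes and that $n\le wh$.

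For FindStartingPosition a single walk-down-and-sweep takes $O(w+h)$ steps, and at most $O(w+h)$ such cycles are needed to locate a local $y$-minimum, giving $O(wh)$ per invocation. Inside AssemblingBB the unobstructed clockwise traversal of $bb(P)$ costs $O(w+h)$, but two kinds of obstructions add work. Each tile-hit triggers a membership test as in \Cref{BB_Membership}, which forces one circumnavigation of the boundary of the current component at cost $O(|\partial P|)$. Each hit may also trigger a shift of the current side outwards; a single shift costs $O(w+h)$, and because a side can be pushed outwards by a total of at most $\max(w,h)$ units during the whole algorithm, the cumulative shift work is $O(\max(w,h)\cdot(w+h))=O(\max(w,h)\cdot wh)$; this is where the outer factor $\max(w,h)$ of the statement originates. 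Finishing performs at most one additional sweep around $bb(P)$ and one search for a new connection vertex described in step~\ref{BB_FinishConstrCase2}, both absorbed in the same bound.

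The combinatorial crux is to bound the number of membership checks per pass by $O(k)$, so that the aggregate membership cost becomes $O(k\cdot|\partial P|)$ instead of the naive $O(|\partial P|^2)$. I would argue this by charging every hit to a distinct convex corner of $P$: while $R1$ walks a fixed side of $bb(P)$, any tile-hit occurs at a convex corner of $P$ whose outward normal opposes the direction of progress, and once the shifting subroutine pushes the side past that corner, the same corner cannot cause a further hit on the same side. Summing over the (at most four) sides then yields $O(k)$ membership tests per pass.

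Finally, I would bound the number of outer restarts. The only source of a restart is the bridge subroutine in step~\ref{BB_ConstructionBridge}, which is invoked only when there is a tile of $P$ with strictly smaller $y$-coordinate than the current start. Hence the $y$-coordinate of the starting position strictly decreases between restarts, bounding their number by $h\le\max(w,h)$; each individual bridge costs $O(w+h)$, a lower-order term. Multiplying the per-pass cost $O(wh+k\cdot|\partial P|)$ by this number of restarts gives
\[
O\bigl(\max(w,h)\cdot(wh+k\cdot|\partial P|)\bigr),
\]
which matches the claim. The main obstacle I anticipate is the charging argument sketched above: a careless estimate of the number of hits would lose a factor of $|\partial P|$, and the argument that distinct hits on a single side correspond to distinct convex corners of $P$ is precisely what makes $k$ (rather than $|\partial P|$) appear in the theorem.
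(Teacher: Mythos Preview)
Your decomposition and the charging of membership tests to convex corners are exactly the paper's argument: the paper also bounds restarts by the number of distinct starting positions (it uses $\max(w,h)$; your $h$ is tighter but gives the same final bound), charges $O(wh)$ construction work per pass, and argues that a membership check is only triggered when assembling resumes after a convex corner has been passed, giving at most $k$ such checks.

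Two small slips to fix. First, ``$O(w+h)$ cycles of cost $O(w+h)$ give $O(wh)$'' is false when $\min(w,h)$ is small; the correct count is at most $h$ cycles (the $y$-coordinate strictly drops) with $O(w)$ sideways work each plus an amortized total of $O(h)$ vertical moves, which does yield $O(wh)$. Second, the tile actually hit while assembling need not itself be a convex corner of $P$; what you really charge to is the convex corner that is \emph{passed} when the ensuing shift clears the obstruction, which is precisely the paper's justification for the $k$ bound. Your remark that the outer $\max(w,h)$ factor ``originates'' from the cumulative shift work is also misleading: shifts occur only in the final successful pass and contribute $O(\max(w,h)\cdot(w+h))\subseteq O(\max(w,h)\cdot wh)$, whereas the factor in front of $k\cdot|\partial P|$ genuinely comes from multiplying by the number of restarts.
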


\begin{proof}
	\textbf{Correctness:} 
	We consider an initial configuration.
The robots start anywhere on the polyomino, and we can always find a local y-minimum by moving down until the first empty is reached, followed by checking for possibly lower placed tiles within any other column of that row and in case it is, repeating these two steps.

	During the assembling step the robot, whenever possible, performs a right turn.
A right turn is possible if the minimal distance between the resulting boundary edge and \(P\) is two.
Every time the robot recognizes a tile of \(P\) with a smaller distance, the current line needs to be shifted outwards until there is no more conflicting tile.
If the line to shift is noticed as the first boundary side, the bounding box will be deleted entirely, and the construction process will restart after we searched for a new starting position.
After at most \(k\) turns the robot will either come into conflict, that results in restarting from another starting position, or it finds a part of the already assembled bounding box and continues by finishing the construction.
	Finishing the bounding box may require shifting the last assembled boundary edge to form a rectangle surrounding \(P\).
Thereby we built a three-way crossing which connects the outer bounding box and all boundary tiles lying inside of it as well as the second robot.
So we construct a second three-way crossing on the south boundary side via extending that side by placing one tile at some vertex, that lies between a tile of \(P\) and one of \(bb(P)\) and ensures connectivity between them.
Since there are no more than that two described three-way crossings, we can retrace the path from one to the other by moving counterclockwise (clockwise) taking every left-hand (right-hand) turn with attention on the characteristic, that the second junction extends the bounding box by just one tile to the inner.
So we can retrace to the starting position, pick up the second robot and remove all unnecessary boundary tiles.

	\textbf{Connectivity:} While assembling the bounding box or checking the membership of a tile, there is always one robot holding \(P\) and the bounding box together.
Whenever we have to build up a bridge for traversing a gap, the constructed bridge between the previous starting position and the first intersection with \(P\) will not be removed until one robot ensures connectivity at this intersection.
Finally, there is at least one connection between \(P\) and the bounding box when we are finishing the construction.
The assembly robot firstly places a tile as new connection right above the south boundary edge and only then retraces to the previous starting position to pick up the second robot and disassembles the inner bounding box parts.
	
	\textbf{Time:} Analogous to \cite{Fekete2018}, we show that every within the final bounding box is visited constant times, but there may be vertices that can be visited up to \(max(w,h)\) times.
Consider a vertex \(v\) within the final bounding box.
\(v\) will be visited for the first time while constructing the boundary.
Other visits are possible during shifting and removing the boundary tile placed on \(v\).
Finally, there may occur a constant number of more visits while building a bridge for searching a new starting position.
That results in \(O(wh)\) steps.
The second thing is that there may be a vertex \(v\) that is visited up to \(max(w,h)\) times.
\Cref{BB_gfx_worstCase} shows a worst-case example.
Since there are possibly up to \(max(w,h)\) starting positions \(s_i\) and the vertex \(v\) could be visited after every start or restart, \(v\) can be visited up to \(max(w,h)\) times.
This yields in \(O(max(w,h) \cdot wh)\) steps for the construction process.
For every tile that leads to a conflict during construction, the decision of whether that tile belongs to \(P\) or the bounding box is done by using the subroutine described in \Cref{BB_Membership}.
This subroutine takes \(O(|\partial P|)\) steps for tiles of \(P\) and \(O(|bb(P)|)\) steps for boundary tiles.
For every tile in \(\partial P\) we have a constant number of tiles in \(bb(P)\) and therefore \mbox{\(O(|bb(P)|) \subseteq O(|\partial P|)\)}.
We only check the membership of a hit tile if the conflict appears while the robot is in assembling state and not while shifting an already existing boundary part.
This can only occur up to \(k\) times because just after a convex corner is passed, the current side can be extended.
Therefore we have a total of \(O(max(w,h) \cdot (wh + k \cdot |\partial P|))\) steps.
\end{proof}

\begin{figure}[h]
	\centering
	\includegraphics[scale=0.6]{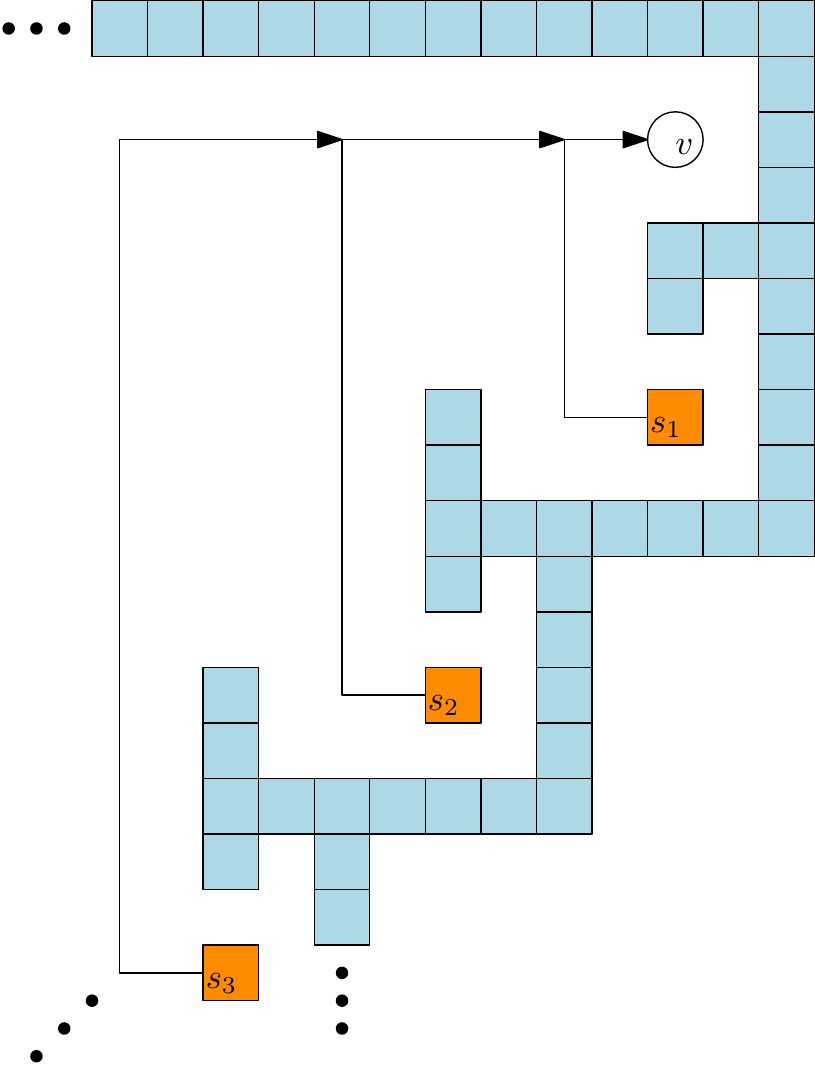}
	\caption[Bounding Box Worst-Case Example]{The vertex \(v\) is visited once for every starting position \(s_i\).}
	\label{BB_gfx_worstCase}
\end{figure}
In case of simple polyominoes and due to the possibility of clearly distinguish between tiles of the already assembled bounding box and those of the polyomino, the membership check, as described in \Cref{BB_Membership}, is not needed anymore.
The distinction can be done by just one robot and hence there is no need for additional robots.
So we obtain:

\begin{corollary}\label{cor:bounding_box_simple}
	Given a simple polyomino \(P\) of width \(w\) and height \(h\), building a bounding box surrounding \(P\) with the need that boundary and \(P\) are always connected, can be done with one robot in
	\(O(max(w,h) \cdot wh)\)
	steps.
\end{corollary}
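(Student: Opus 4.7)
The plan is to adapt the correctness, connectivity, and time arguments of Theorem~\ref{th:bounding_box} to the single-robot setting by exploiting the specialized three-lane boundary structure described in \Cref{BB_Simple}. I would first observe that the three phases (FindStartingPosition, AssemblingBB, Finishing) of \Cref{BB_Construction} still apply verbatim, except that every action previously performed by $R2$ (following, holding a bridge endpoint, acting as a membership beacon) must be either eliminated or replaced by information that can be read off the local neighborhood.

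The main new ingredient is that the specialized boundary elements of \Cref{BB_gfx_simpleElements} make tile-membership decisions purely local: a bounding-box tile always has a characteristic empty neighbor on the middle lane, while a polyomino tile cannot, because simplicity of $P$ forbids interior holes of the required shape. I would use this to argue that every time the lone robot hits a tile during assembly, a constant-radius inspection suffices to decide whether to shift the current side outwards, to switch to bridge construction (using the element of \Cref{BB_gfx_simpleElements}(c)), or to enter the finishing phase. For connectivity, I would note that the three-tile-wide boundary is connected through the outer and inner lanes on its own, so no second robot is needed to bridge gaps in the structure; during bridge construction and during the final re-routing of the connection between $P$ and $bb(P)$ (using the element of \Cref{BB_gfx_simpleElements}(d)), I would verify that at every step at least one placed tile or the robot itself links the current partial bounding box to $P$, exactly as in the two-robot case but with the robot playing only the role of the leader.

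For the running time I would follow the same bookkeeping as in Theorem~\ref{th:bounding_box}: each vertex inside the final bounding box is touched a constant number of times per starting position (placement, possible shift, possible removal, plus a constant number of bridge-related visits), and there are at most $O(\max(w,h))$ distinct starting positions by the same worst-case argument illustrated in \Cref{BB_gfx_worstCase}. This yields $O(\max(w,h) \cdot wh)$ steps for the construction itself. The crucial simplification is that the $k \cdot |\partial P|$ term in Theorem~\ref{th:bounding_box} disappears entirely, since membership is now decided in $O(1)$ steps locally rather than by the $O(|\partial P|)$ walk-around subroutine of \Cref{BB_Membership}.

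The main obstacle, I expect, will be a careful case analysis verifying that the local test really is unambiguous for simple $P$: one must rule out that a polyomino tile together with its neighbors can mimic the pattern of occupied/empty vertices used by the three-lane boundary element, and check this not only for generic sides but also at corners, at bridge elements, and at the special connection element used in finishing. Once this local certificate is established, the rest of the proof is a direct simplification of the two-robot argument.
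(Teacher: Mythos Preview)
Your proposal is correct and follows essentially the same approach as the paper: the paper's justification for the corollary is simply that the specialized three-lane boundary structure of \Cref{BB_Simple} lets a single robot distinguish bounding-box tiles from polyomino tiles locally, so the $O(|\partial P|)$ membership subroutine of \Cref{BB_Membership} (and with it the second robot and the $k\cdot|\partial P|$ term) is eliminated, leaving the $O(\max(w,h)\cdot wh)$ bound from the construction part of Theorem~\ref{th:bounding_box}. Your write-up is in fact considerably more detailed than the paper's one-sentence derivation, particularly in flagging the case analysis needed to verify that no configuration of a simple $P$ can mimic a boundary element.
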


\section{Scaling Polyominoes}
\label{Scale}

In this section, we present several algorithms.
The first is up-scaling of polyominoes in the general case when there already exists a bounding box surrounding the given polyomino.
This is followed by an algorithm for the inverse procedure, down-scaling a polyomino that is already scaled.
Finally, we present a strategy for adapting algorithms that solve arbitrary
tasks on a polyomino within the same model, with the additional constraint of preserving connectivity.

\subsection{Up-Scaling}

We describe a strategy for scaling polyominoes by a constant scaling factor \(c\).
Our strategy ensures that all placed tiles and all robots are connected during the whole process.
The scaled version will be build up right next to the given polyomino.
In the following, we describe the case of constructing the scaled version of \(P\) to the left of its initial position; the other case follows analogously.

\subsubsection{Preparation}\label{Scale_Prep}
\Cref{Scale_gfx_init} shows an example configuration before starting the scaling process.
Any initial configuration fulfills the following conditions.

\begin{itemize}
	\item The one tile wide bounding box that encloses the polyomino \(P\) is already constructed.
There is no edge (N, E, S, W) of the bounding box with a minimal distance to \(P\) that is greater than two.
	\item We start with the leader (red) robot \(R1\) right beneath the robot \(R2\) that holds \(P\) and the boundary together.
This connection between \(P\) and the bounding box is located on the south side of the bounding box, as defined in \Cref{BB}.\\
\end{itemize}

\begin{figure}[h]
	\centering
	\includegraphics[scale=0.6]{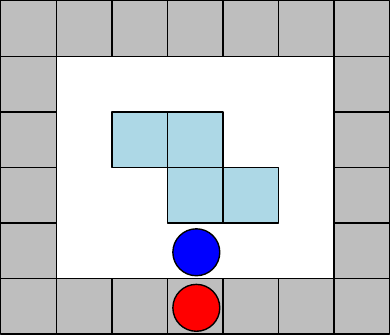}
	\caption[Initial Configuration]{Example configuration before starting the scaling process.}
	\label{Scale_gfx_init}
\end{figure}

During the scaling process, three columns within the bounding box (including the box itself) are involved.
The first (from west to east) is the current column of \(P\) to scale; the column where the robot is placed on in \Cref{Scale_gfx_prep}.
The second column, which is filled with tiles excepting the topmost row, is used to ensure connectivity and helps to recognize finishing of the current column.
And the third column marks both, the current overall progress and that within the column to scale.
This three columns can be prepared as follows:

\begin{enumerate}
	\item\label{Scale_Prep1} 
	The robot fills up the easternmost column within the bounding box, excepting the northernmost vertex.
From now on there is no other robot needed, cause this filled column ensures the connectivity of the bounding box and \(P\).
	\item\label{Scale_Prep2} The robot walks southwards and removes the southernmost tile of the column to the right.
This marks that the current column to scale lies two steps to the west.
For marking the next tile to scale, he removes the tile two steps to the north, which is the first possible tile of \(P\).
	\item\label{Scale_Prep3} Place a tile right above the south side of the bounding box in the column to scale.
\end{enumerate}

\begin{figure}[h]
	\centering
	\includegraphics[scale=0.6]{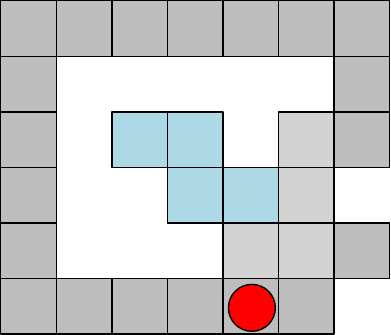}
	\caption{Prepared example configuration.}
	\label{Scale_gfx_prep}
\end{figure}

\subsubsection{Scaling Process}
\label{UpScaling}

It follows a description of the different states during the scaling process.
All example figures are based on a scaling factor \(c=3\).

\begin{itemize}
	\item \textit{NextTile}: The robot moves right until the column marker is reached, that is the first empty vertex.
And then the robot moves upwards looking for the row marker.
The next tile to handle is the one lying two steps to the left from this marker.
	Outgoing from the row marker, three cases may appear:
	\begin{enumerate}
		
		\item The vertices one and two steps to the left are occupied.
The robot shifts the row marker one vertex up and continues by placing a scaled \(c \times c\) segment, that represents the current occupied vertex, in the target area.
(ScaleTile)
		
		\begin{figure}[H]
			\centering
			\includegraphics[scale=0.6]{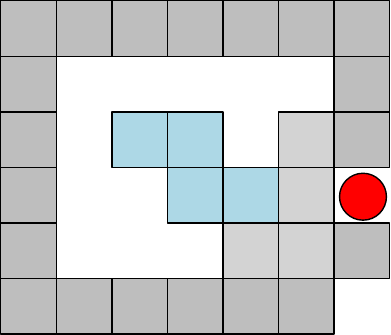}
			\caption{'ScaleTile' case}
			\label{Scale_gfx_scaleTile}
		\end{figure}
	
		\item The vertex one step to the left is occupied and the vertex two steps to the left is empty.
The robot places a tile on the inspected vertex, shifts the row marker one vertex up and continues with placing a scaled \(c\times c \) segment, that represents the current blank vertex, in the target area.
(ScaleEmpty)
		
		\begin{figure}[H]
			\centering
			\includegraphics[scale=0.6]{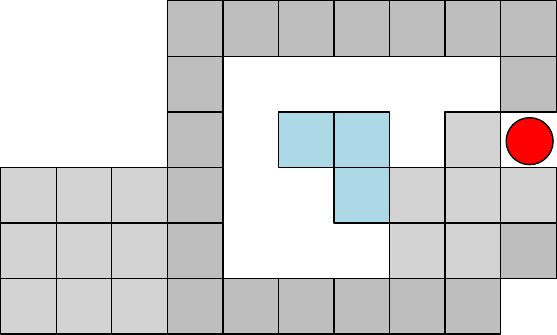}
			\caption{'ScaleEmpty' case}
			\label{Scale_gfx_scaleEmpty}
		\end{figure}
	
		\item The vertex one step to the left is empty.
That indicates the current column is finished.
(NewColumn)
		
		\begin{figure}[H]
			\centering
			\includegraphics[scale=0.6]{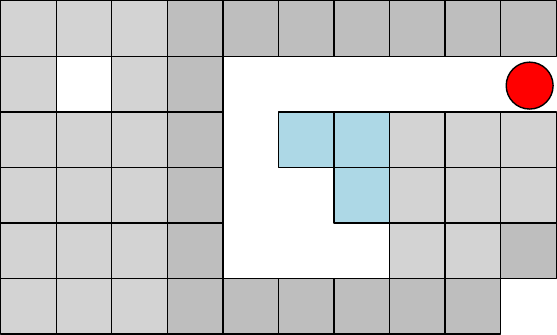}
			\caption{'NewColumn' case}
			\label{Scale_gfx_newColumn}
		\end{figure}
	
	\end{enumerate}

	\item \textit{ScaleEmpty/ScaleTile}: The robot builds up a scaled version of the inspected vertex to the left of the bounding box.
The exact procedure depends on whether it was the first vertex in a new column or any other vertex.
For the first case, the robot follows the row induced by the southern bounding box side leftwards until an unoccupied vertex appears and starts building the scaled element.
Otherwise, after detecting the first unoccupied vertex, the robot moves upwards on the last occupied vertex until the next blank vertex appears, that is the final position for placing the scaled element.\\
	The scaled version of a tile is represented by a $c\times c$ square of occupied vertices.
To label an scaled empty vertex, one vertex or any other sized square of vertices in the center of the $c\times c$
scaled element has to stay unoccupied.
With the exception of scaling factor \(c=2\), where one chosen vertex of the $2 \times 2$ segment stays blank.\\
	After the scaled element is placed, the robots state transits back to 'NextTile'.
	
	\item \textit{NewColumn}: Preparing a new column can be done similarly to steps \ref{Scale_Prep2} and \ref{Scale_Prep3} of preparation in \Cref{Scale_Prep}, i.e., we remove the previously used marker, place them one column further to the left and position a tile on the lowermost vertex within the new column.
The scaling process is finished, when the desired vertex during adapted step \ref{Scale_Prep3} is already occupied as shown in \Cref{Scale_gfx_newColCleanUp}.
That will only appear if the left side of the bounding box is reached and lets the robots state transit to 'CleanUp' state.
Otherwise, the robot continues with the next tile to scale.
	
\begin{figure}[h]
	\centering
	\includegraphics[scale=0.6]{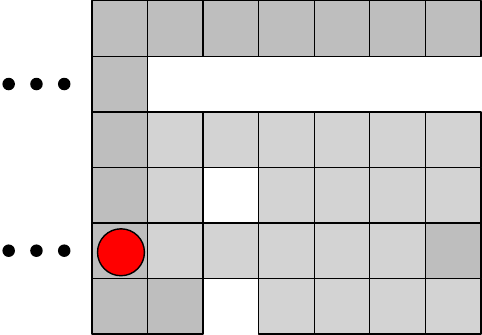}
	\caption['CleanUp' State]{Configuration that causes the robot transiting to 'CleanUp' state.}
	\label{Scale_gfx_newColCleanUp}
\end{figure}

	\item \textit{CleanUp}: The clean up subprocess can be split into two steps.
	\begin{enumerate}
		\item Removing all tiles within the originally \(w \cdot h\) bounding box area including the boundary leftovers themselves.
This can be done columnwise from right to left until the first column without a single blank vertex occurs.
As seen in \Cref{Scale_gfx_cleanUp1} the first column with this condition can be identified as the left boundary side.
		
		\begin{figure}[H]
			\centering
			\includegraphics[scale=0.6]{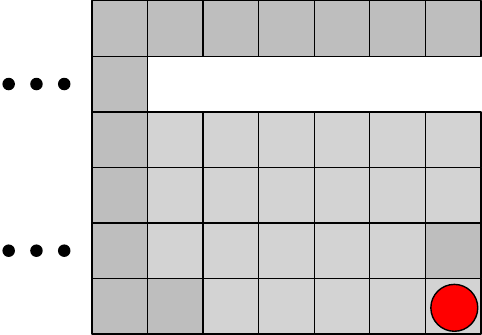}
			\caption['CleanUp' Step One]{First step during clean up.}
			\label{Scale_gfx_cleanUp1}
		\end{figure}
	
		\item Removing all \(c\times c \) segments that represent an unoccupied vertex within the target area.
\Cref{Scale_gfx_cleanUp2} shows the start of this step (a) and the finished clean up process (b).
Because the last scaled column was completely blank, this is done from left to right.
Every time we move onto a new column, the robot has to check if this is the last column.
That is needed because in case it is, we have to clean up from both sides of that column, i.e., top and bottom, until the first segment is reached, which represents an occupied vertex.
Afterward, we have to check for possibly further segments which may have to be removed between these two.
This special handling of the last column is needed, cause if we just follow a top-down or bottom-up strategy, we could lose connectivity.\\
		
	\end{enumerate}

	\begin{figure}[h]\centering
		\subfigure[]{
			\includegraphics[scale=0.6]{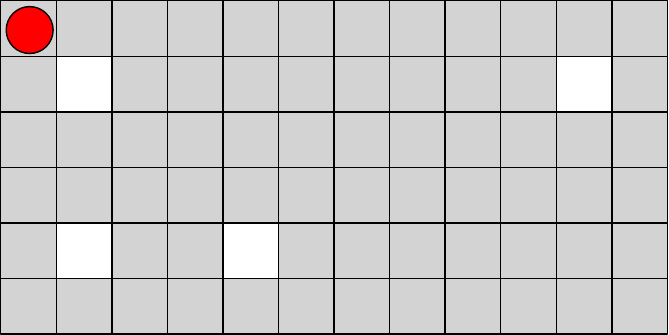}
		}\hfil
		\subfigure[]{
			\includegraphics[scale=0.6]{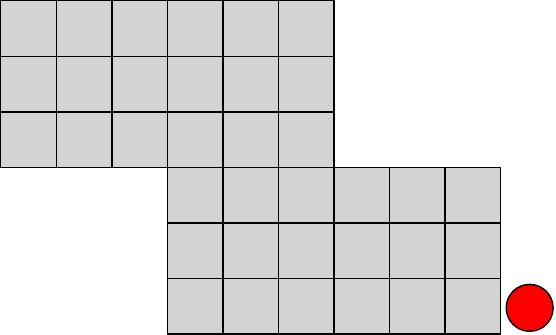}
		}
		\caption['CleanUp' Finished]{(a) Starting step two from left to right; (b) 'CleanUp' is finished and the completely scaled polyomino is left.}
		\label{Scale_gfx_cleanUp2}
	\end{figure}
\end{itemize}

\subsubsection{Simple Polyominoes}
In case the polyomino to be scaled is simple and the bounding box was constructed as described in section \ref{BB_Simple}, minor adaptions are needed.

Because there could be a minimal distance of three between bounding box and polyomino, the robot has to check for connectivity with \(P\) while performing step \ref{Scale_Prep1} of \Cref{Scale_Prep}, which means that the robot has to fill up a second column if connectivity is not ensured.
After that it is possible to remove the connection segment between southern boundary and \(P\).
For the further scaling process it is needed to determine the inner or outer lane of the boundary to hold the column marker.

\subsubsection{Analysis}
\begin{theorem}\label{th:scaling}
	Scaling a polyomino \(P\) of width \(w\) and height \(h\) by a constant scaling factor \(c\) without loss of connectivity can be performed with two robots in \(O(wh \cdot (c^2 + cw + ch))\) steps.
\end{theorem}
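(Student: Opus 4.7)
The plan is to establish three properties in sequence: that the construction produces exactly a $c$-scaled copy of $P$ (correctness), that the tile-and-robot subgraph stays connected at every step (connectivity), and that the step count matches the stated bound. Because the algorithm is state-machine based, my strategy is to argue an invariant on the two markers (column marker and row marker) after each iteration of NextTile, namely that they point precisely at the next cell of $P$ to be processed in the column-by-column, bottom-up scan order. The finite automaton can then distinguish the three cases of NextTile purely by inspecting the two vertices immediately west of the row marker, which in turn triggers ScaleTile, ScaleEmpty, or NewColumn.

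For correctness I would walk through the preparation (which installs the three auxiliary columns described in \Cref{Scale_Prep}), then show that each ScaleTile/ScaleEmpty call appends a $c\times c$ block in the correct target position—found by following the south edge of the bounding box westward to the first empty vertex and then climbing until the first empty vertex above—and that the "hole pattern" used to mark empty cells is distinguishable from a solid $c\times c$ block. NewColumn then advances the markers one column west using the same logic as preparation steps \ref{Scale_Prep2}--\ref{Scale_Prep3}; the termination test is exactly the configuration in \Cref{Scale_gfx_newColCleanUp}. Finally CleanUp removes the original bounding-box interior east-to-west and then sweeps the scaled copy to eliminate all hole-marker blocks, yielding a clean $c$-scaled polyomino.

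Connectivity I would argue by maintaining three invariants: (i) at the end of preparation step \ref{Scale_Prep1} the filled easternmost column of $bb(P)$ connects $P$ to the south side of the box, so $R2$ may be released; (ii) throughout the scaling loop the south edge of $bb(P)$ stays intact and bridges $P$ to the growing scaled copy, while every tile that is removed (column marker, row marker, hole-marker) has a neighbour that remains connected through either the filled easternmost column or the solid portion of the scaled copy; (iii) during CleanUp each individual removal can be certified by an explicit alternative path inside the already-constructed scaled copy.

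For the step count, each of the $wh$ original cells is processed $O(1)$ times in the main loop, and one pass costs $O(w+cw)$ horizontal steps to walk between the markers and the scaled target area, $O(h+ch)$ vertical steps to reach the top of the current scaled column, and $O(c^2)$ steps to lay down the block; summing gives $O(wh(c^2+cw+ch))$, and the CleanUp visits each of the $O(c^2 wh)$ target vertices a constant number of times, which is absorbed into the bound. The main obstacle I anticipate is the CleanUp connectivity argument for the final scaled column: a naive top-down or bottom-up sweep can pinch off the column prematurely, which is exactly why the algorithm scans from both ends and handles the last column as a special case, and I would formalise this by a case analysis on the position of each hole-marker relative to the already-removed portion to show that the residual stays connected after each deletion.
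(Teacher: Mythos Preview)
Your proposal is correct and follows essentially the same three-part structure (correctness, connectivity, time) as the paper's own proof, with the same step-count accounting of $O(c^2 + cw + ch)$ per cell over $wh$ cells and the CleanUp cost absorbed into the bound. Your treatment is in fact more thorough than the paper's rather terse argument—you spell out explicit marker invariants and connectivity invariants where the paper simply asserts that ``there is at least one tile in the current column connecting the polyomino and all other placed tiles at any time''—but the underlying reasoning is the same; the only point you leave implicit that the paper states is that the need for two robots stems solely from the preceding bounding-box construction, not from the scaling loop itself.
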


\begin{proof}
	\textbf{Correctness:} Starting with a described initial configuration, connectivity between the polyomino and the constructed bounding box is guaranteed by one robot holding both together.
After the preparation steps are done, there is at least one tile in the current column connecting the polyomino and all other placed tiles at any time.
Because every scaled segment is placed right next to the bounding box or an already scaled element of \(P\), connectivity is ensured during the scaling process.
Cleaning up is done by removing every tile within the initial bounding box area as well as the remaining parts of the boundary itself.
Followed by removing every element within the scaled \(c\cdot w\times c\cdot h\) area that represents an empty vertex.
Both can be done as described from one side to the other and will always guarantee connectivity.
The scaling step is performed from right to left and terminates when the left side of the bounding box is detected.
Because of the need for a bounding box surrounding \(P\) and we are only able to construct it with two robots, we also need two robots for scaling polyominoes under the condition of connectivity.

	\textbf{Time:} Every of the \(w \cdot h\) vertices within the bounding box of \(P\) will be handled twice during the whole process.
First time for checking whether there is a tile placed on it or not and the second time during clean up after the scaled version is completed.
The needed steps of scaling one vertex are given by searching for the desired target position and placing a segment of \(c \times c\) size.
Searching the target position can be done in \mbox{\(O((w + h) + (c\cdot w + c\cdot h)) \subset O(c\cdot w + c\cdot h)\)} steps and constructing the desired segment needs \(c^2\) steps.
That results in \(O(c^2 + cw + ch)\) steps per vertex.
The clean up requires visiting every vertex within the initial bounding box area and the scaled polyomino once, which stops after \(w\cdot h \cdot c^2\) steps.
Hence we have a total of \(O(wh \cdot (c^2 + cw + ch))\) steps for scaling a polyomino.
\end{proof}

Due to the possibility of constructing the bounding box for a simple polyomino \(P\) with only one robot as described in \Cref{BB_Simple} and the fact that additional robots are not needed for the described scaling strategy, we obtain:
\begin{corollary}\label{cor:scaling_simple}
	Scaling a simple polyomino \(P\) of width \(w\) and height \(h\) by a constant scaling factor~\(c\) without loss of connectivity can be performed with one robot in \(O(wh \cdot (c^2 + cw + ch))\) steps.
\end{corollary}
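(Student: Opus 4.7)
The plan is to reduce the corollary to \Cref{th:scaling} by showing that only the bounding box construction step actually consumes the second robot, and then to invoke \Cref{cor:bounding_box_simple} to realise that step with a single robot when $P$ is simple. The skeleton of the proof will therefore mirror the proof of \Cref{th:scaling}, with two extra ingredients to check: that the scaling routine of \Cref{UpScaling} can be executed by a single robot, and that it still works on top of the three-lane bounding box produced for simple polyominoes.

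First I would re-examine every phase of the scaling process (Preparation, NextTile, ScaleTile/ScaleEmpty, NewColumn, CleanUp) and observe that from preparation step~\ref{Scale_Prep1} onwards the leader $R1$ is the only active robot: the filled easternmost column takes over the connectivity role that $R2$ had while the bounding box was being assembled, so $R2$ can be released as soon as this column is in place. This shows that the scaling stage by itself is a one-robot procedure, and the only reason \Cref{th:scaling} requires two robots is the preceding general-case bounding box construction.

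Next I would describe the adaptations already sketched in the paragraph on \textbf{Simple Polyominoes}: when the bounding box is the three-tile-wide structure of \Cref{BB_Simple}, the minimum distance between $P$ and the east side may be either two or three, so preparation step~\ref{Scale_Prep1} is extended to fill a second easternmost column whenever the distance is three, and only then is the southern connection tile between $P$ and the box removed. The column and row markers are pinned to a fixed lane (say the inner lane) of the three-wide boundary so that the tests used by NextTile and NewColumn are still well-defined. Because simple polyominoes are hole-free, the alternating pattern of the middle lane allows $R1$ to distinguish box tiles from polyomino tiles directly, so no analogue of the membership subroutine of \Cref{BB_Membership} is needed.

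Finally I would verify connectivity and the running time. Connectivity follows line by line from the argument in \Cref{th:scaling}: after the eastern column(s) are filled, the scaled image is always attached to $P$ through the remaining boundary, each freshly placed $c\times c$ block is adjacent to the box or to a previously placed block, and both CleanUp sweeps are organised so that no cut tile is ever removed. For the time bound, the extra eastern column, the (at most constant) widening of the box, and the fact that searches along the boundary may traverse three lanes rather than one all contribute only a constant-factor overhead, so the complexity remains $O(wh\cdot(c^{2}+cw+ch))$. The main obstacle I expect is the bookkeeping in the previous paragraph: pinning the markers to a fixed lane and reformulating every ``first empty vertex'' test of \Cref{UpScaling} in terms of that lane, so that the state machine of $R1$ can still recognise the ScaleTile, ScaleEmpty, NewColumn, and CleanUp transitions unambiguously; once the lane convention is fixed this is routine, but it is the only place where the argument genuinely differs from that of \Cref{th:scaling}.
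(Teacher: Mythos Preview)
Your proposal is correct and follows essentially the same approach as the paper. The paper's justification is the single sentence preceding the corollary: since \Cref{cor:bounding_box_simple} gives a one-robot bounding box for simple polyominoes and the scaling procedure of \Cref{UpScaling} itself requires no second robot once the eastern column is filled, one robot suffices; you spell out in more detail the adaptations of the Simple Polyominoes subsection (the possible second filled column, pinning markers to a fixed lane of the three-wide box) that the paper only sketches, but the underlying reduction is identical.
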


\subsection{Down-Scaling}
\label{DownScaling}
Down-scaling a polyomino that is already scaled by a known \(c\) corresponds to scaling it by the factor \(\frac{1}{c}\) and works similarly to the previously described up-scaling.
We position the down-scaled version right beside the up-scaled one.
In the following, we describe the case of assembling the down-scaled version of \(P\) to the right, the other case follows analogues.
Prerequisite is that the \(cw \times ch\) area is filled with segments representing an occupied or unoccupied vertex.
If this is not given, we need to construct a bounding box and fill up every empty \(c \times c\) square within the bounded area with a segment that represents an unoccupied vertex.
Afterward, the bounding box and rows or columns which only contain segments representing empty vertices, have to be removed.
From that, we obtain the previously described rectangle shape which has to be handled.
We mark the first segment and prepare the target area by extending the lowermost row by three tiles, directed to the target area.
\Cref{DownScale_gfx_Start} shows (a) a polyomino satisfying the prerequisites and (b) the initial configuration.
The rightmost column of the target area is used as a marker column.
The next desired target vertex can be determined by searching for the first empty vertex in this marker column and outgoing from there it is the vertex one step to the left.

\begin{figure}[h]\centering
	\subfigure[]{
		\includegraphics[scale=0.6]{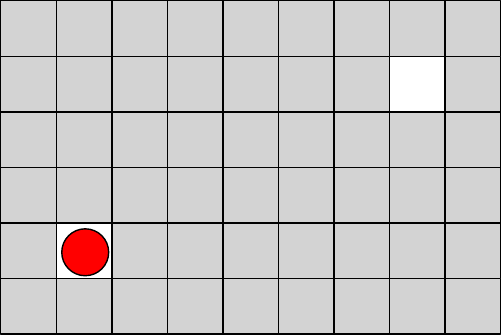}
	}\hfil
	\subfigure[]{
		\includegraphics[scale=0.6]{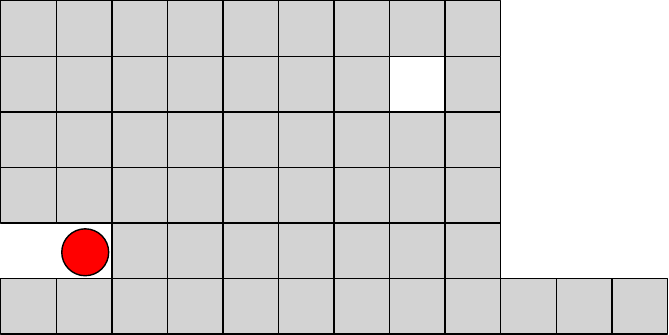}
	}	
	\caption[Starting Down Scaling]{(a) A polyomino that satisfies the described prerequisites.
(b) Starting arrangement for down scaling where the target area lies to the right of \(P\).}
	\label{DownScale_gfx_Start}
\end{figure}

From that we are going to process \(P\) columnwise from left to right and bottom-up within the columns.
To track the progress within a column we place a marker in the current \(c \times c\) segment.
For the first column we place tiles for segments of \(P\) representing an occupied vertex and for the further columns we remove tiles for segments of \(P\) representing an empty vertex, because the desired target column is completely filled with tiles as it was used as marker column before; the difference is shown in \Cref{DownScaling_gfx_diffColumns}.
This is needed to ensure connectivity during the execution.
After the down-scaling of a column is finished, the robot removes its initial version and extends the target area by one tile as a new marker column.
After the last column of the scaled version of \(P\) has been destroyed, the robot recognizes that there is no further valid \(c \times c\) segment in the next column, as shown in \Cref{DownScaling_gfx_finish}.
Hence we can finish the process by removing the farthest column and the lowest row.

\begin{figure}[h]\centering
	\subfigure[]{
		\includegraphics[scale=0.6]{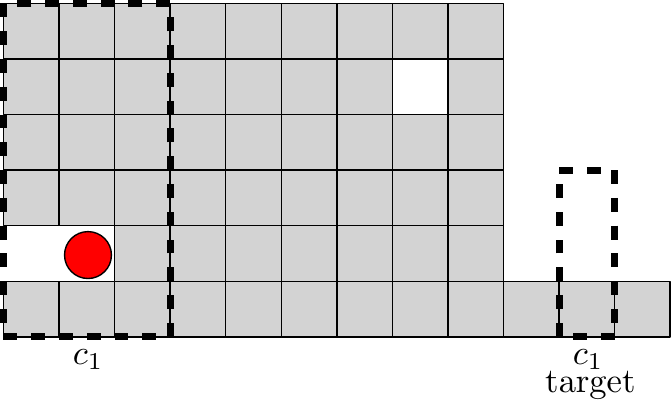}
	}\hfil
	\subfigure[]{
		\includegraphics[scale=0.6]{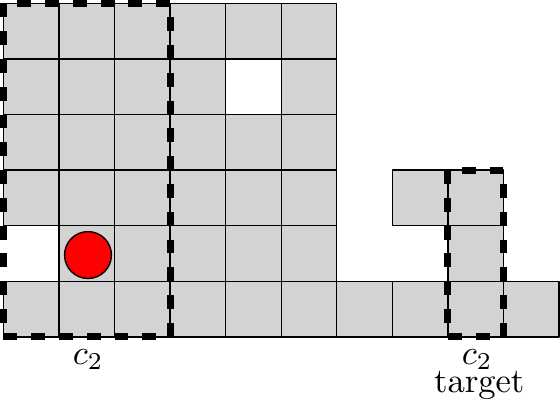}
	}
	\caption[Down Scaling First and Further Columns]{(a) \(c_1\) and its target.
We have to place tiles for every occupied segment.
(b) \(c_2\) and its target.
We have to remove tiles for unoccupied segments}
	\label{DownScaling_gfx_diffColumns}
\end{figure}

\begin{figure}[h]\centering
		\subfigure[]{
			\includegraphics[scale=0.8]{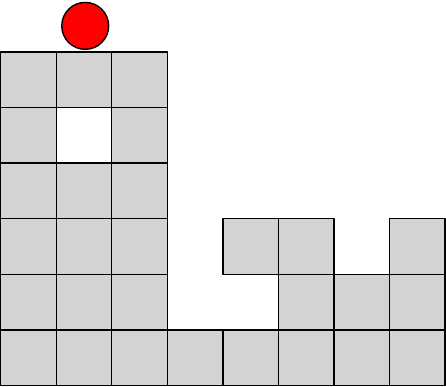}
		}\hfil
	\subfigure[]{
		\includegraphics[scale=0.8]{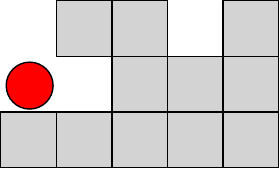}
	}
	\caption[Down Scaling Finish]{(a) The last column is finished.
(b) The robot recognizes no more valid segments in the next column.}
	\label{DownScaling_gfx_finish}
\end{figure}

\begin{corollary}\label{cor:down_scaling}
	Given a scaled polyomino \(P\) of width \(cw\), height \(ch\).
Down scaling \(P\) by the inverted constant scaling factor \(\frac{1}{c}\) without loss of connectivity can be performed with one robot in \mbox{\(O(wh \cdot (c^2 + cw + ch))\)} steps.
\end{corollary}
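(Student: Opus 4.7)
The plan is to mirror the three-part structure used in the proof of Theorem~\ref{th:scaling}, treating correctness, connectivity, and running time separately, since the algorithm itself has already been described informally. For correctness I would first argue that after the preparatory step (bounding box construction, filling in empty $c\times c$ blocks, stripping empty border rows/columns, and placing the three-tile extension of the bottom row together with the first segment marker) the input occupies a clean $cw\times ch$ rectangle whose rightmost column serves as the marker column of the target area. Given this invariant, the robot can unambiguously locate the next target vertex as the cell directly left of the lowest empty vertex of the marker column, and can unambiguously identify the next source segment via the local $c\times c$ progress marker. A case check on whether a source segment represents an occupied or empty original vertex then shows that the bottom-up sweep inside a column, followed by the repurposing of the next source column as a new marker column, produces exactly a $1$-tile image of each scaled source cell; termination is detected when the adjacent column contains no valid $c\times c$ segment, at which point the final marker column and the bottom extension row are removed.

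For connectivity I would formalise the invariant that after every atomic operation the union of all placed tiles and the robot is connected. The key observation, already suggested by the asymmetric treatment highlighted in Figure~\ref{DownScaling_gfx_diffColumns}, is that processing the first source column means \emph{placing} tiles into a previously empty column (each new tile is anchored to the bottom extension row and, from the second tile onward, to the tile below it), whereas processing any subsequent source column means \emph{removing} tiles from a column that is fully occupied because it served as the marker column of the previous step; removal preserves connectivity since the remaining tiles of that column stay attached to the bottom row and, through it, to the still undestroyed source on the left. Destruction of a completed source column is likewise safe because the only link between source and target is the bottom row, which is never disturbed during that phase. I would write this up as a short inductive argument over the number of processed segments.

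The main obstacle I expect is the boundary case analysis at the transition between two adjacent source columns, namely the moment when the robot finishes a source column, extends the bottom target row by one tile, and repurposes the next source column as a new marker column: one must verify that the new bottom-row tile is placed before anything the robot later depends on is deleted, and that the robot never strands itself on a cell that the next operation will disconnect. This is essentially a local check but requires care at the first and last source columns. Once this is settled, the time analysis is routine: there are $wh$ source segments, and for each one the robot performs an $O(c^2)$ scan inside the $c\times c$ block plus at most $O(cw + ch)$ travel between source and target position, while the preliminary fill-in and the final cleanup touch each vertex of the $cw\times ch$ area only a constant number of times. Summing gives the claimed bound of $O(wh\cdot (c^2 + cw + ch))$ steps, with one robot sufficing throughout since no separate holder is ever needed.
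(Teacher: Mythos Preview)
Your proposal is correct and follows the same approach as the paper, which in fact gives no separate proof for this corollary at all: the paper simply states it immediately after the algorithm description, relying on the evident analogy to the analysis of Theorem~\ref{th:scaling}. Your three-part argument (correctness of the column-by-column sweep with the marker-column mechanism, connectivity via the place-then-remove asymmetry of Figure~\ref{DownScaling_gfx_diffColumns}, and the $O(wh\cdot(c^2+cw+ch))$ time bound from $wh$ segments each costing $O(c^2)$ locally plus $O(cw+ch)$ travel) is exactly the fleshed-out version of what the paper leaves implicit, so there is nothing to contrast.
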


\subsection{Adapting Algorithms}
\label{AdaptAlg}

Within the Robot-on-Tiles model there are different algorithms solving various tasks on polyominoes.
These algorithms not necessarily ensure connectivity between all tiles and robots during execution.
We show that with the help of the described scaling strategy, we can adapt most of these algorithms 
to preserve connectivity.

\begin{definition}
	We define the \textit{robots movement area} (striped in \Cref{Def_gfx_space}) while executing an arbitrary algorithm as a \(w'\times h'\) rectangle in which \(w'\) is defined as the difference between the lowest and highest x-coordinate out of all visited vertices and all placed tiles; \(h'\) is defined analogously for y-coordinates.
\end{definition}

\begin{figure}[h]
	\centering
	\includegraphics{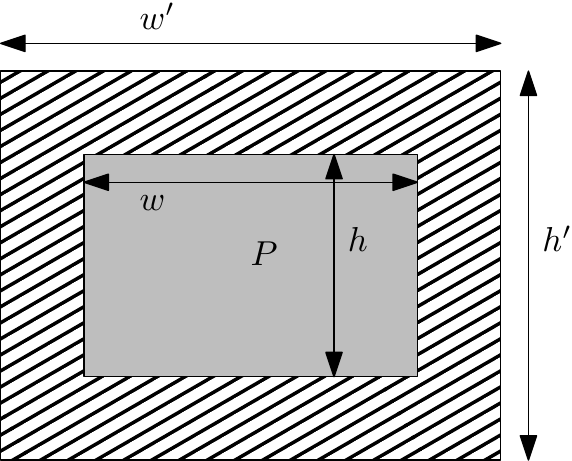}
	\caption[Space Requirement]{A polyomino \(P\) of width \(w\) and height \(h\) and the robots movement area while executing an arbitrary algorithm}
	\label{Def_gfx_space}
\end{figure}

\begin{theorem}\label{th:adapting}
	Given an algorithm \(\mathcal{A}\) for solving an arbitrary task on a polyomino \(P\) of width \(w\) and height \(h\) within the Robot-on-Tiles model, a time complexity \(\mathcal{T(A)}\) and which possibly not ensures connectivity between all placed tiles, we can generate an algorithm \(\mathcal{A'}\) for the same task with additionally guaranteed connectivity during the whole procedure and a total running time of \(O(wh\cdot (c^2 + cw + ch) + max((w'-w)h',(h'-h)w') + c\cdot \mathcal{T(A)})\)
\end{theorem}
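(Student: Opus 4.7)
The idea is to simulate $\mathcal{A}$ on a scaled-up copy of $P$ where each original cell is represented by a $c \times c$ block with enough internal redundancy that any single-cell modification inside a block cannot disconnect the arrangement. The adapted algorithm $\mathcal{A'}$ therefore has three phases: scale up, simulate $\mathcal{A}$, scale down.

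First, I would invoke Theorem~\ref{th:scaling} (or Corollary~\ref{cor:scaling_simple} in the simple case) to build the $c$-scaled version of $P$ next to the original. This costs $O(wh(c^2+cw+ch))$ steps and by construction maintains connectivity throughout. After this phase the original copy is discarded, leaving a bounded, scaled polyomino that will serve as the workspace. Second, since $\mathcal{A}$ may visit or place tiles outside the original bounding box, I extend the scaled workspace to cover the entire movement area: I sweep the two rectangular strips of dimensions $(w'-w) \times h'$ and $w' \times (h'-h)$ adjacent to the scaled polyomino and fill each of their $c \times c$ cell positions with a scaled ``empty'' template, exactly of the kind produced in \Cref{UpScaling}. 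Because the templates carry a full border ring and only an interior hole, the strip remains connected to the scaled polyomino as it grows, and the total step count is $O(\max((w'-w)h',(h'-h)w'))$, absorbing the constant $c^2$ work per block into the big-O.

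Third, I simulate $\mathcal{A}$ step by step on the scaled board. A step of $\mathcal{A}$ is either a move to an adjacent grid cell or a place/remove action at the current cell. In the simulation each such step is mapped to $O(c)$ actions of the simulating robot: a move becomes a walk across one $c \times c$ block to the adjacent block, and a place/remove action is executed by toggling the designated interior marker cell of the current block (using the same hollow-center encoding used to distinguish scaled-empty from scaled-occupied blocks in \Cref{UpScaling} and the thick-boundary encoding of \Cref{BB_Simple}). The crucial invariant is that the outer ring of every block remains untouched, so toggling a single interior cell never disconnects a block from its neighbours nor the robot from the global structure. Over $\mathcal{T(A)}$ simulated steps this phase contributes $O(c\cdot \mathcal{T(A)})$. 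Finally I apply Corollary~\ref{cor:down_scaling} to down-scale the resulting arrangement by $\tfrac{1}{c}$, again in $O(wh(c^2+cw+ch))$ steps, thereby recovering on the original scale the output that $\mathcal{A}$ would have produced. Summing the four contributions yields the claimed bound.

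The main obstacle is the simulation phase: one must fix a precise encoding that allows a finite automaton to (i) recognise which cell inside a $c \times c$ block plays the role of the ``original'' cell for the look phase of $\mathcal{A}$, (ii) translate each of $\mathcal{A}$'s LCM transitions into a constant-size sequence of moves/toggles on the scaled board, and (iii) guarantee that at every intermediate configuration the union of placed tiles and robots remains connected. The thick boundaries and hollow-center templates from Sections~\ref{BB_Simple} and~\ref{UpScaling} give exactly the slack needed: each block carries an always-present connected skeleton that is independent of its ``occupied/empty'' status, so modifications are purely local and cannot break global connectivity. Verifying this invariant for each transition type is the real content of the proof; the step counts in the three other phases then follow directly from the results already established.
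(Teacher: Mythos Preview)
Your overall architecture---scale up, simulate with $c$-fold moves on $c\times c$ blocks whose outer ring guarantees connectivity, then optionally scale back down---matches the paper's. The cost accounting is also the same. But there is one genuine gap in your second phase.

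You propose to \emph{pre-fill} the two rectangular strips of size $(w'-w)\times h'$ and $w'\times(h'-h)$ before the simulation starts. A finite automaton cannot do this: $w'$ and $h'$ depend on the input polyomino $P$ (not just on $\mathcal{A}$), and the robot has no way to count out those distances or to recognise when it has reached the boundary of the movement area. The paper avoids this by filling \emph{lazily}: whenever the simulated robot is about to step onto a vertex that is not yet covered by any $c\times c$ segment, it pauses, fills the entire newly entered row or column with ``empty'' segments (leaving a distinguishable mark on the segment it entered from so it can return), and only then resumes the simulation. This on-demand extension is what allows a memoryless robot to realise the $O(\max((w'-w)h',(h'-h)w'))$ term without ever knowing $w'$ or $h'$. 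Your eager sweep would need exactly that knowledge, so as written it is not implementable in the model.

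A minor point: the paper does \emph{not} fold the final down-scaling into $\mathcal{A'}$ or into the stated running time. It mentions down-scaling only after the proof, and only conditionally (``if there exists only one polyomino in its scaled representation''), because after $\mathcal{A}$ runs the output may no longer satisfy the prerequisites of Corollary~\ref{cor:down_scaling}. Including it as an unconditional fourth phase overstates what can be guaranteed.
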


\begin{proof}
	\textbf{Approach:} 
	The first step is to scale the polyomino \(P\) of width \(w\) and height \(h\) by any constant factor \(c > 1\).
We make use of the described strategy (\Cref{UpScaling}), with a minor adaption during the clean up.
We only remove the leftovers of the initial polyomino and the constructed bounding box, leaving the area where the scaled version was built up as it is.
Hence we get a \(cw \times ch\) rectangle containing the scaled polyomino and \(c \times c \) segments representing empty vertices from within the \(w \times h\) area of the initial polyomino.
	From that for every single robot movement in \(\mathcal{A}\), the robot makes \(c\) steps in \(\mathcal{A'}\) and if necessary additionally checks whether it is placed on a \(c \times c\) segment representing an empty or an occupied vertex.
\Cref{AdaptAlg_gfx_movement} (a) shows an example, where the robot moves along the red lines and stops on the black doted vertices within a \(c \times c\) segment.
	Every time the robot steps on an empty vertex, which does not belong to a scaled \(c \times c\) segment (\Cref{AdaptAlg_gfx_movement} (b)), we fill up the entered column or row with \(c \times c\) segments, representing unoccupied vertices.
To identify that segment of the column (row) which we stepped on firstly, we mark it by leaving an extra vertex within that segment blank, as shown in \Cref{AdaptAlg_gfx_fillRow}.\\
	\textbf{Analysis:} The scaling procedure requires \(O(wh\cdot (c^2 + cw + ch))\) steps.
Due to the described movement adjustment of doing \(c\) steps in \(\mathcal{A'}\) instead of one step in \(\mathcal{A}\), the running time of \(\mathcal{A}\) increases by the constant factor \(c\).
Finally there is the fill up procedure, which we use in case of stepping on an empty vertex.
Depending on the robots movement area for \(\mathcal{A}\), we get \(O(max((w'-w)h',(h'-h)w'))\) steps for filling up.
This results in an overall running time of \(O(wh\cdot (c^2 + cw + ch) + max((w'-w)h',(h'-h)w') + c\cdot \mathcal{T(A)})\) steps.
\end{proof}

After the algorithm terminates and if there exists only one polyomino in its scaled representation within the constructed rectangle area, we can eliminate empty columns and rows followed by using the described down-scaling strategy (\Cref{DownScaling}) for resetting it to its initial scale.

\begin{figure}[h]\centering
		\subfigure[]{
			\includegraphics[scale=0.7]{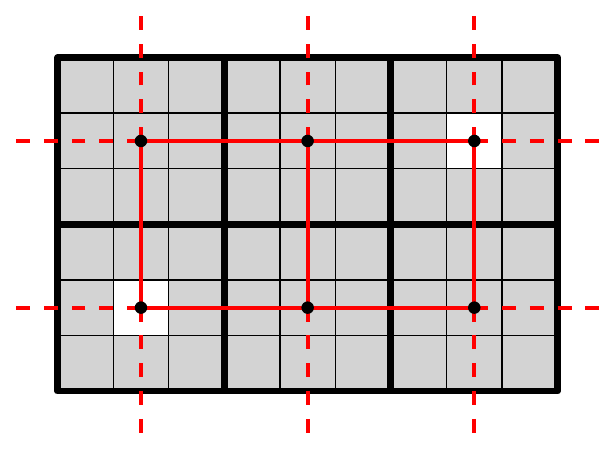}
		}\hfill
		\subfigure[]{
			\includegraphics[page=2,scale=0.7]{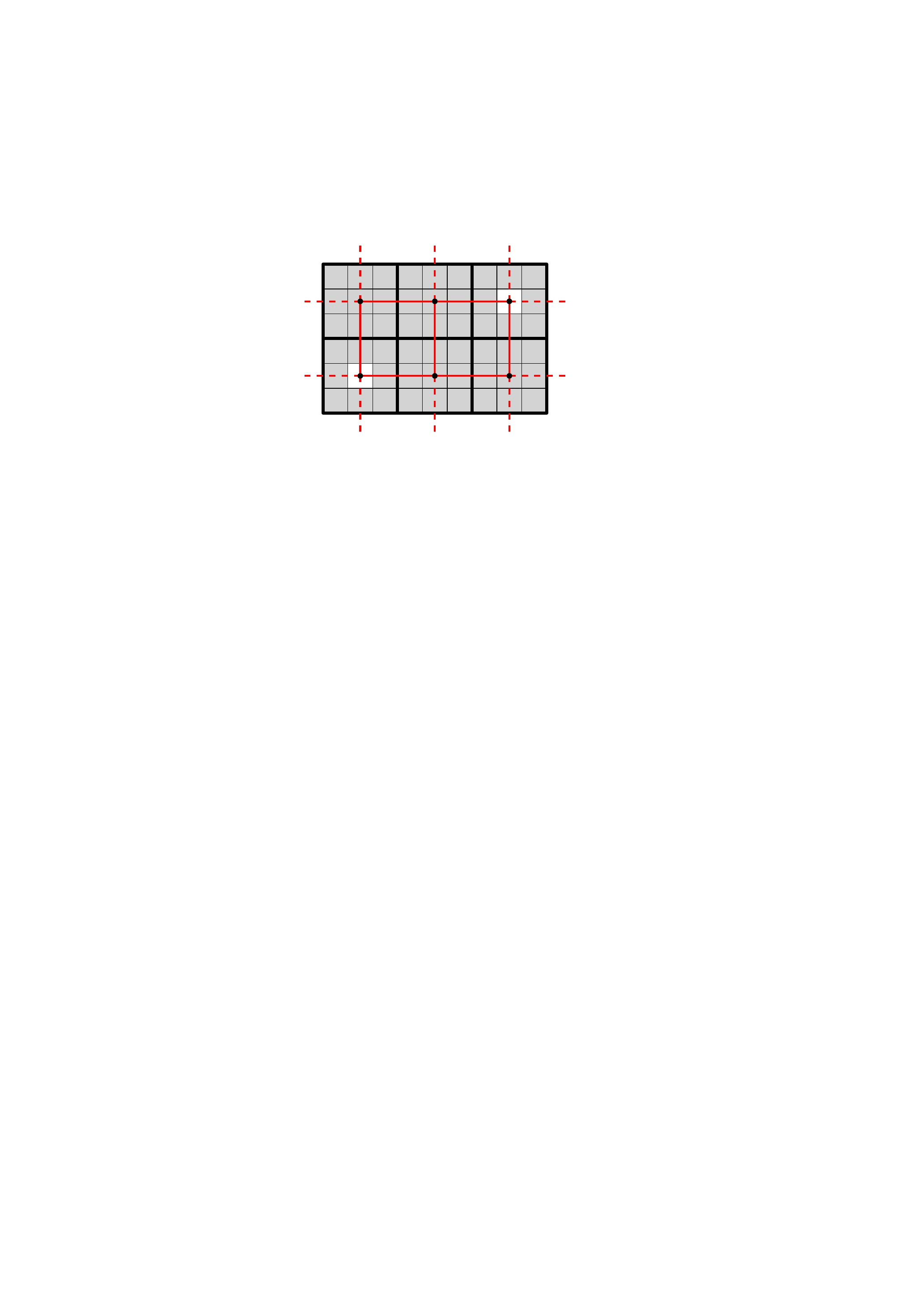}
		}
	\caption[Movement Grid]{(a) An already scaled polyomino with a scaling factor of \(c = 3\).
The robot moves along the red lines when an algorithm is executed.
(b) The robot enters an empty vertex when it tries to move southwards.}
	\label{AdaptAlg_gfx_movement}
\end{figure}

\begin{figure}[h]\centering
	\subfigure[]{
		\includegraphics[scale=0.7]{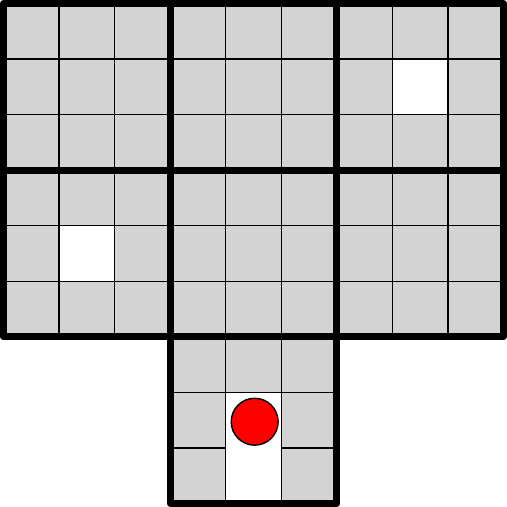}
	}\hfil
	\subfigure[]{
		\includegraphics[scale=0.7]{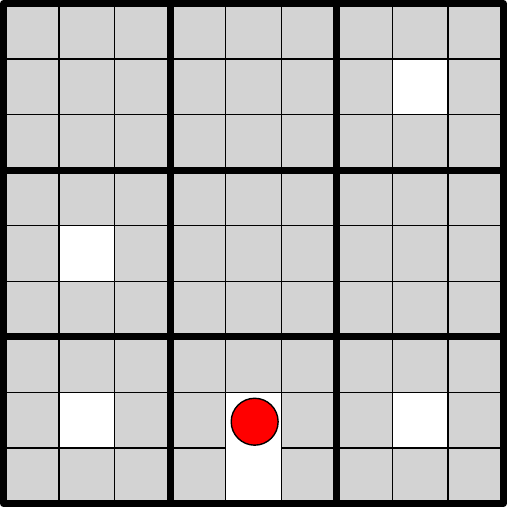}
	}
	\caption[Entering a New Row]{(a) Marked the first \(c \times c\) segment, which the robot entered.
(b) New entered row filled with \(c \times c\) segments and retraced to the previously marked segment.}
	\label{AdaptAlg_gfx_fillRow}
\end{figure}

\section{Scaling Monotone Polyominoes}
\label{monotone}

In case the polyomino to scale is monotone with respect to at least one axis, we can scale it using two robots as well, but without the preceding bounding box construction.
Our scaling strategy is split into two phases.
Firstly we scale columnwise in the direction, in which \(P\) is monotone and afterward, also columnwise, in the other direction.
In the following, we describe the case of scaling an x-monotone polyomino.
The y-monotone case is handled from a rotated view, i.e., we move left instead of up, right instead of down, and so on.
\Cref{monotone_gfx_initial-scaled} shows an example of an x-monotone polyomino (a) and its scaled version (b).

\begin{figure}[h]\centering
		\subfigure[]{\includegraphics[scale=0.7]{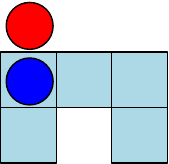}
		}\hfil
		\subfigure[]{
			\includegraphics[scale=0.7]{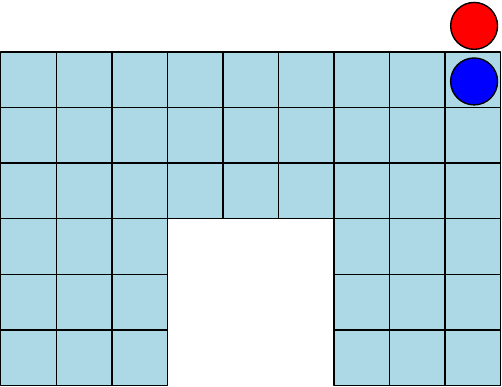}
		}
	\caption[Scaling Monotone Polyominoes]{(a) An x-monotone input polyomino \(P\).
(b) \(P\) scaled by factor \(c=3\).}
	\label{monotone_gfx_initial-scaled}
\end{figure}

Initially, the two robots must be placed together on the polyomino \(P\).
The first robot \(R1\) as the leader, who places and removes tiles and leads the movement.
And the second robot \(R2\) for marking tasks.
We assume monotony and start by moving to the leftmost column and afterward, to the topmost tile.
From that, the actual scaling process can be started with the first phase.

\subsection{Phase I}
During the first phase, we process the polyomino columnwise from left to right resulting in a polyomino scaled in the y-direction by the factor \(c\).
The marker robot \(R2\) is placed on the uppermost tile of the current column and \(R1\) right above.
For every tile in that column \(R2\) waits for \(R1\) to communicate the assignment of the current vertex and moves one step down until the lower end is reached.
After \(R1\) got a signal for an occupied vertex, it extends the current row to the upper side by \(c-1\) tiles and returns to \(R2\).
When \(R2\) steps on the first empty vertex, the current column is finished after \(R1\) returns to \(R2\), see \Cref{monotone_gfx_column-phase1} (b) for an example.

\begin{figure}[h]
	\centering
		\subfigure[]{
			\includegraphics[scale=0.7]{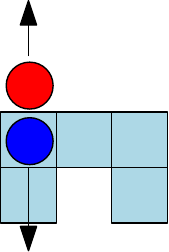}
		}\hfil
		\subfigure[]{
			\includegraphics[scale=0.7]{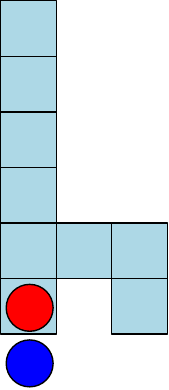}
		}
	\caption[Scaled Column Monotone \(P\)]{(a) Processing one column during the first phase, \(R1\) extends the column to the upper side and \(R2\) marks the current progress within that column.
(b) The first column scaled in y-direction by factor \(c=3\).}
	\label{monotone_gfx_column-phase1}
\end{figure}

After every finished column during Phase I, it is ensured that all columns to the left are scaled in the y-direction.
Moving on to the next column may lead to one of three possible cases depending on the lowest tiles y-coordinate of the previously finished and the following column.
The lowest tile of the next column may have an equal, a higher or lower y-coordinate as the lowest tile of the previously finished column.
That results in different handling.
Let \(t_p\) (\(t_c\)) be the lowest tile within the previously handled (current) column and \(y_{t_p}\) (\(y_{t_c}\)) the y-coordinate of that tile.
Then the three cases are handled as follows:

\begin{enumerate}
	\item \(y_{t_p} = y_{t_c}\): No additional preparation needed, we simply continue as before by scaling the next column in the y-direction.
	\item \(y_{t_p} < y_{t_c}\): To avoid displacement in cause of empty vertices underneath the next column, each of these empty vertices has to be also scaled.
Therefore, the robots start with the first empty vertex right beneath the next column.
From that, for every empty vertex with a higher or equal y-coordinate, every unprocessed column of \(P\), i.e., all columns to the right, will be shifted \(c-1\) steps upwards.
This can be done clockwise by placing \(c-1\) tiles above every column, followed by removing \(c-1\) tiles from the bottom end of every column.
See \Cref{monotone_gfx_higherY} for an example.
	\begin{figure}[h]\centering
		\subfigure[]{
			\includegraphics[scale=0.7]{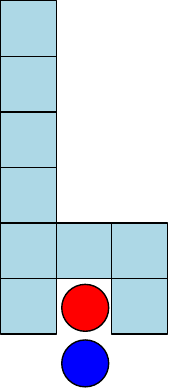}
		}\hfil
		\subfigure[]{
			\includegraphics[scale=0.7]{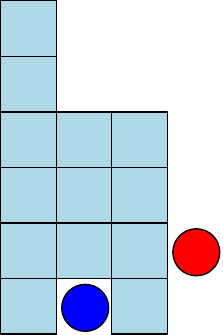}
		}\hfil
		\subfigure[]{
			\includegraphics[scale=0.7]{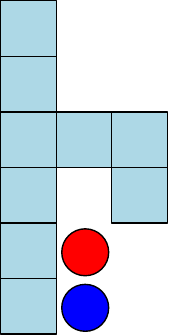}
		}\hfill
		\caption[Shifting Unhandled Columns]{(a) Recognizing \(y_{t_p} < y_{t_c}\), so in this example every unprocessed column needs to be shifted upwards \(c-1\) steps one time.
(b) Two tiles placed above every following column.
(c) Two tiles removed from the bottom end of every column.}
		\label{monotone_gfx_higherY}
	\end{figure}
	\item \(y_{t_p} > y_{t_c}\): In this case, we shift the further unprocessed columns \(c-1\) steps upwards for every occupied vertex with a higher or equal y-coordinate.
The shifting can be done as described in the second case.
Afterward, the scaling can continue reflected horizontally, i.e., for every tile within a column, the robot extends it by placing \(c-1\) tiles right beneath the lower end of the column instead of the upper side.
The possible cases when entering a new column have to be detected mirrored, too.
That means we execute the procedure as described for \(y_{t_p} < y_{t_c}\) when we recognize \(y_{t_p} > y_{t_c}\) and vice versa and concerning the uppermost tile instead of the lowermost.
Moreover, a possibly shifting procedure has to be done counterclockwise instead of clockwise.
	\begin{figure}[h]\centering
			\subfigure[]{
				\includegraphics[scale=0.7]{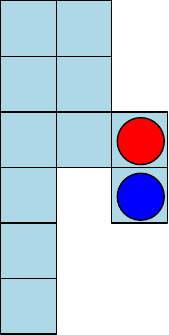}
			}\hfil
			\subfigure[]{
				\includegraphics[scale=0.7]{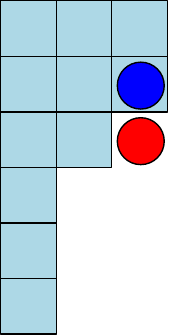}
			}\hfil
			\subfigure[]{
				\includegraphics[scale=0.7]{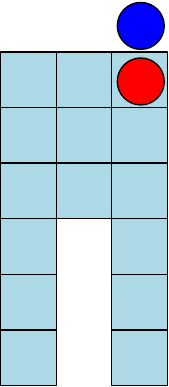}
			}
		\caption[Scaling Horizontally Mirrored]{(a) Recognizing \(y_{t_p} > y_{t_c}\), so in this example every unprocessed column needs to be shifted upwards \(c-1\) steps one time.
And afterward, the column scaling will continue horizontally mirrored.
(b) After shifting is done, the column scaling in y-direction will continue horizontally mirrored.
So the additional tiles get placed below instead of above the column.
(c) Extended the current column downwards.}
		\label{monotone_gfx_lowerY}
	\end{figure}
\end{enumerate}

Using this strategy enables us to scale the polyomino in one direction.
Afterward, the scaling has to be done in the other direction, which we separate in Phase II.

\subsection{Phase II}
During Phase II, every column has to be scaled on the other axis, here in the x-direction.
This can be done by copying every column \(c-1\) times next to its current position.
Except for the first column, copying one column results in a conflict with its neighbor.
Therefore, before we duplicate a column within the polyomino, every column we have processed so far has to be shifted outwards \(c-1\) steps in x-direction.
This can be done by repeating the following steps, starting with \(R1\) on the leftmost column and \(R2\) marking the next column to scale, as shown in \Cref{monotone_gfx_duplicateCol} (a).

\begin{enumerate}
	\item Duplicate the current column to shift \(c-1\) times to the left.
	\item From the uppermost tile of the current column, see \Cref{monotone_gfx_duplicateCol} (b), the robot moves \(c\) steps to the right and searches for the first tile within the entered column by moving down.
	\item After the first tile is found, \(R1\) moves additional \(\lfloor \frac{c}{2} \rfloor \) steps down, followed by moving \(c-1\) steps to the left.
The reached vertex will be used as a marker when removing the previously shifted column.
\Cref{monotone_gfx_duplicateCol} (c) shows the configuration after executing this step.
The robot \(R1\) is placed on top of the marker vertex.
	\item We remove all tiles from above and underneath of this marker.
Firstly done from one end of the column and afterward, from the other end.
This step is executed for \(c-1\) columns.
\(R1\) ends on the next column to shift, or in case the second robot \(R2\) can be found at the bottom end of that column, shifting is done.
Now, the next column to scale can be duplicated.
\end{enumerate}

	\begin{figure}[h]\centering 
		\hfil\subfigure[]{\includegraphics[scale=0.6]{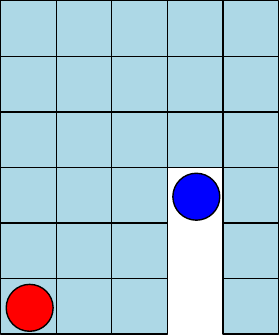}
		}\hfil
		\subfigure[]{\includegraphics[scale=0.6]{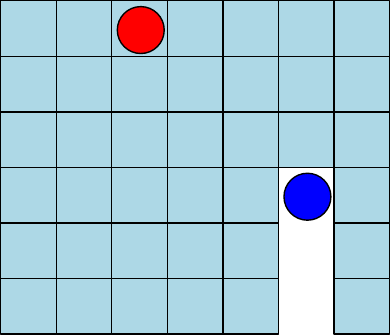}
		}\hfil
		\subfigure[]{\includegraphics[scale=0.6]{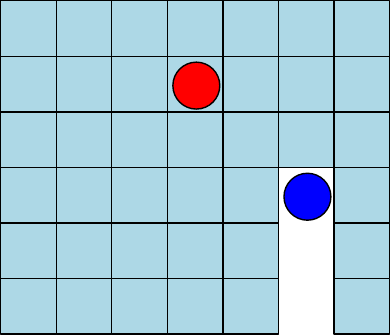}
		}
	
		\hfil\subfigure[]{\includegraphics[scale=0.6]{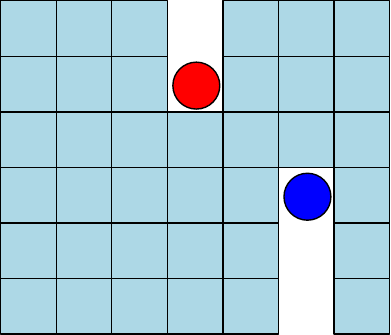}
		}\hfil
		\subfigure[]{\includegraphics[scale=0.6]{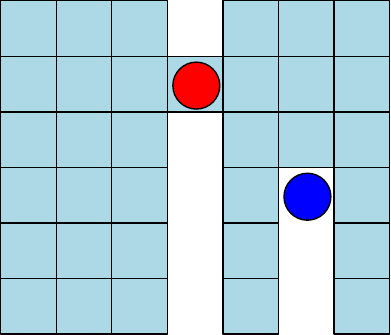}
		}\hfil
		\subfigure[]{\includegraphics[scale=0.6]{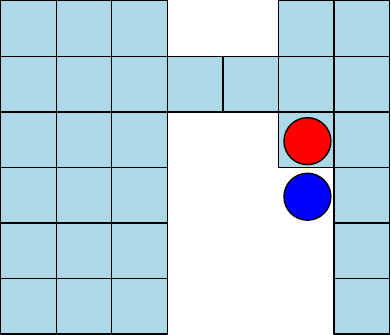}
		}
	\caption[Duplicate Column]{(a) The first column is duplicated \(c-1\) times.
The next column to scale is marked by \(R2\) staying beneath it.
\(R1\) starts the shifting procedure with the leftmost already processed column.
(b) The current column to shift is duplicated \(c-1\) times.
(c) Found the marker tile after executing Step 3.
(d) Removed all tiles above the marker.
(e) Removed all tiles underneath the marker.
(f) Shifting one column by \(c-1\) steps is done.}
	\label{monotone_gfx_duplicateCol}
	\end{figure}

\begin{theorem}\label{th:scale_monotone}
	Scaling an x-monotone polyomino \(P\) of width \(w\) and height \(h\) by a constant scaling factor \(c\) without loss of connectivity can be done with two robots in \(O(w^2hc^2)\) steps.
\end{theorem}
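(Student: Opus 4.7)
The plan is to establish three properties of the algorithm: correctness of the produced shape, preservation of connectivity at every atomic step, and the claimed $O(w^2hc^2)$ step bound, treating Phase~I and Phase~II separately.

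For the correctness of Phase~I I would proceed by induction on the number of processed columns, with the invariant that after column $j$ is completed the first $j$ columns of $P$ have been stretched by a factor $c$ in the $y$-direction, while the remaining columns still faithfully represent their original contents (possibly translated as a rigid block). The $x$-monotonicity of $P$ enters twice: each column is a contiguous vertical run, so the operation ``extend this row upward by $c-1$ tiles'' scales it correctly, and the three-case analysis at column transitions is exhaustive precisely because monotonicity limits how the column bases can shift relative to each other. Case~2 realigns by pushing the unprocessed tail up by $c-1$ per empty row, while Case~3 switches to a mirrored mode that scales downward; I would need to verify that, once the algorithm has flipped into mirrored mode, subsequent transitions are correctly detected with the roles of topmost and lowermost tile swapped and the shift direction reversed.

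For Phase~II, I would argue that after Phase~I the polyomino is $c$-scaled in $y$, and that duplicating each of its $w$ columns $c-1$ times to the left yields the final $cw\times ch$ shape, provided room is first made by shifting the already-scaled columns outward. The marker-tile trick, which pivots at a vertex $\lfloor c/2\rfloor$ below the first tile of the next column and clears each column by deleting from both ends simultaneously, is what keeps every intermediate configuration grid-connected; I would step through the six configurations of the shifting subroutine to confirm that between any two atomic operations the set of occupied vertices together with the two robot positions induces a connected subgraph. Connectivity in Phase~I follows similarly from the invariant that tiles are always placed adjacent to the current column before any deletion, and that the marker robot $R2$ bridges the column to the already-processed part whenever $R1$ leaves.

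For the time bound I would account for work per column. In Phase~II the dominant cost comes from the shift-then-duplicate step applied to the $i$-th original column: there are already $(i-1)c$ scaled columns in place, each of height $ch$, and shifting them left by $c-1$ via the marker procedure costs $O(c^2h)$ per shifted column. Summing $\sum_{i=1}^{w}(i-1)\,c^2h = O(w^2hc^2)$ gives the claimed bound. Phase~I costs $O(c^2h)$ for the in-column scaling per original column plus an amortized shifting cost across transitions. The main obstacle is precisely this amortization: a naïve accounting per transition can inflate the bound by a factor of $h$, so I would argue that the total cumulative upward shift across all transitions in Phase~I is bounded by the final height $O(ch)$, which keeps Phase~I within the same $O(w^2hc^2)$ budget as Phase~II and completes the proof.
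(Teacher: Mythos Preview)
Your overall plan mirrors the paper's proof: inductive correctness over processed columns, step-by-step connectivity verification of the subroutines, and a phase-by-phase time analysis. Your Phase~II calculation lands on the same $O(w^2hc^2)$ bound, though note the slip where you state there are $(i-1)c$ columns to shift yet sum only $(i-1)\,c^2h$; the paper simply bounds the shift cost by $O(whc^2)$ per original column and multiplies by $w$.

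The gap is in your Phase~I time argument. The amortization you propose---that the total cumulative upward shift across all transitions is bounded by the final height $O(ch)$---does not hold: the algorithm may mirror and un-mirror as the column base oscillates, and an $x$-monotone polyomino whose column bases zig-zag by $\Theta(h)$ across $\Theta(w)$ transitions incurs total shift distance $\Theta(wch)$, not $O(ch)$. More to the point, the amortization is unnecessary. The direct per-transition accounting you are trying to avoid is exactly what the paper uses, and it already suffices: each column transition triggers at most $O(h)$ shift events, each costing $O(wc)$ to push all unprocessed columns by $c{-}1$, so Phase~I costs $O(w\cdot h\cdot wc)=O(w^2hc)$, comfortably inside the $O(w^2hc^2)$ budget set by Phase~II. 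Replacing your amortization with this direct count closes the gap.
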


\begin{proof}
	\textbf{Correctness:} During Phase I the polyomino is processed columnwise from left to right.
Using the second robot as a marker within every column, we can track the progress within it, and we only move on to the next column, when every tile of the current column is scaled in the y-direction.
In case we have to shift tiles to represent the scaling of an empty vertex, the whole part of the polyomino which we have not processed so far will be shifted to ensure connectivity between the columns.
This shifting is done clockwise (or in mirrored case, counterclockwise), i.e., firstly we extend all columns by \(c-1\) tiles on one side, and secondly, we reduce all columns by \(c-1\) tiles on the other side.
This method ensures on the one hand that all unprocessed columns so far are properly aligned and on the other that \(P\) stays connected during the whole procedure.
This first phase terminates when the rightmost column is finished.
	In Phase II we have to scale every column in the x-direction, so we also move on columnwise.
Whenever it is needed to shift the already processed columns to free space for the next column to scale, one robot stays on the next column to scale, while the other does the actual shifting.
We use bridges to ensure connectivity between two adjacent columns when shifting one of them.
These bridges are placed between the first found connection between them, i.e., the first row where a tile is placed in both columns when the robot moves top-down.
Shifting is finished, when the next column to scale is reached, which is marked by the second robot.
When there is no further column to the right, the algorithm terminates with \(P\) scaled by the constant \mbox{scaling factor \(c\)}.
	\\
	\textbf{Time:} Within every column of \(P\) there are at most \(O(h)\) tiles, so scaling every tile in y-direction takes \(O(hc)\) steps per column and \(O(whc)\) steps in total.
Shifting every unprocessed column takes \(O(wc)\) steps and may be needed up to \(O(h)\) times per column.
This yields in a total running time of \(O(whc + w^2hc) \subset O(w^2hc)\) for the first phase.
	During Phase II, for every column to scale, we have to shift all previously handled columns \(c-1\) steps outwards, what results in at most \(O(whc^2)\) steps per column.
Scaling one column during Phase II needs \(O(hc^2)\) steps.
Therefore, we get a total of \(O(w^2hc^2)\) steps for scaling an x-monotone polyomino.
\end{proof}

\section{Conclusion}
\subsection{Presented Work}
The "Robot-on-Tiles" model provides an infinite grid graph with robots acting like a deterministic finite automaton and with the ability to manipulate polyominoes that are placed on that grid.
This model is well designed to investigate problems related to programmable matter.
We have shown that an arbitrary polyomino can be scaled by a constant factor \(c\) without loss of connectivity, using up to two robots, depending on the input polyomino.
We presented different approaches, firstly for polyominoes in general and with slight modifications for simple polyominoes, and secondly for polyominoes that are monotone with respect to at least one axis. 
With knowledge about the ability to scale polyominoes without loss of connectivity, we can adapt most algorithms, which perform arbitrary tasks on polyominoes within the same model and possibly not ensure connectivity, with an additional account for the requirement for connectivity.

In \Cref{BB} we presented an algorithm to construct a bounding box surrounding a given polyomino of width \(w\) and height \(h\) using two robots in a running time of \(O(max(w,h) \cdot (wh + k \cdot |\partial P|))\) steps, where \(k\) is the number of convex corners.
One robot holds the bounding box and the polyomino together, while the other robot performs the actual construction.
The construction is performed clockwise around the polyomino, and whenever the robot comes in conflict with an already existing tile, we use a subroutine to check whether that tile belongs to the already constructed bounding box or the initial polyomino.
With slight modifications that were described in \Cref{BB_Simple}, we can construct the bounding box for a simple polyomino with the use of just one robot and a running time of \(O(max(w,h) \cdot wh)\) steps.

Using the previously described bounding box construction, we introduced our scaling algorithm in \Cref{Scale}.
Once the bounding box is constructed, the initial polyomino can be scaled right next to the bounding box using one robot.
For every vertex within the bounded area, the robot assembles a \(c \times c\) segment that represents either an occupied or an empty vertex.
Hence we can distinguish these types of vertices in the area of the scaled polyomino.
Our scaling algorithm has a total running time of \(O(wh \cdot (c^2 + cw + ch))\) steps using two robots.
Since the use of two robots is only reasoned by the previously bounding box construction, we can scale a simple polyomino using one robot within the same time complexity.
Given a polyomino that is already scaled by a factor \(c\), a very similar strategy for down-scaling by the inverse factor \(\frac{1}{c}\) was presented in \Cref{DownScaling}.

In case the polyomino to scale is monotone, we can use another scaling strategy, which is described in \Cref{monotone} and gets along without a previously constructed bounding box.
We use two robots and start by extending every column by \(c-1\) tiles for every initial tile within that column, followed by copying every column \(c-1\) times.
We end in a running time of \(O(w^2hc^2)\) steps using two robots.

Finally, we introduced a strategy to adapt most algorithms, which can be performed on a polyomino within the robots on tiles model and not necessarily ensure connectivity, with the additional requirement of connectivity in \Cref{AdaptAlg}.
We use our described scaling strategy with minor adaptions to receive a tiled rectangle containing the scaled polyomino.
Afterward, we maintain this rectangle by filling up every empty column or row which we entered while executing the adapted algorithm.
Considering the running time for previously scaling, and for maintaining the received rectangle, our strategy results in a total running time of \(O(wh\cdot (c^2 + cw + ch) + max((w'-w)h',(h'-h)w') + c\cdot \mathcal{T(A)})\) steps.

\subsection{Future Work}
There is a whole range of possible extensions.
 
Is it possible to scale general polyominoes without the preceding bounding box
construction?  A possible approach could be to cut the polyomino into a subset
of monotone polyominoes, which could be handled separately.

Another challenge is to develop distributed algorithms with multiple robots that are
capable of solving a range of problems with the requirement of connectivity,
without having to rely on the 
preceding scaling procedure that we used in our work.  
Other questions arise from additional requirements of real-world applications
e.g., in the construction and reconfiguration of space habitats.

\bibliographystyle{abbrv}
\bibliography{bibliography}

\end{document}